\newcommand{\graphicsScale}{.85}
\newcommand{\para}[1]{\textbf{#1}~}
\newcommand{\OO}{\mathcal{O}} 
\newcommand{\floor}[1]{\lfloor #1 \rfloor}
\newcommand{\ceil}[1]{\lceil #1 \rceil}
\title{Short Flip Sequences to\texorpdfstring{\\}{} Untangle Segments in the Plane}
\titlerunning{Short Flip Sequences to Untangle Segments in the Plane}
\author{Guilherme D. da Fonseca}
{Aix-Marseille Université and LIS, France}
{guilherme.fonseca@lis-lab.fr}
{https://orcid.org/0000-0002-9807-028X}
{}
\author{Yan Gerard}
{Université Clermont Auvergne and LIMOS, France}
{yan.gerard@uca.fr}
{https://orcid.org/0000-0002-2664-0650}
{}
\author{Bastien Rivier}
{Université Clermont Auvergne and LIMOS, France}
{bastien.rivier@uca.fr}
{https://orcid.org/0000-0001-5985-2169}
{}
\authorrunning{G. D. da Fonseca, Y. Gerard, and B. Rivier}
\keywords{Planar geometry, Reconfiguration, Matching, Euclidean TSP} 
\begin{document}
\maketitle

\begin{abstract}
A (multi)set of segments in the plane may form a TSP tour, a matching, a tree, or any multigraph. If two segments cross, then we can reduce the total length with the following \emph{flip} operation.
We \emph{remove} a pair of crossing segments, and \emph{insert} a pair of non-crossing segments, while keeping the same vertex degrees.
The goal of this paper is to devise efficient strategies to flip the segments in order to obtain crossing-free segments after a small number of flips.
Linear and near-linear bounds on the number of flips were only known for segments with endpoints in convex position. We generalize these results, proving linear and near-linear bounds for cases with endpoints that are not in convex position.
Our results are proved in a general setting that applies to multiple problems, using multigraphs and the distinction between removal and insertion choices when performing a flip.
\end{abstract}

\section{Introduction}
\label{sec:intro}

The\emph{Euclidean Travelling Salesman Problem (TSP)} is one of the most studied geometric optimization problems. We are given a set $P$ of points in the plane and the goal is to find a tour $S$ of minimum length. While the optimal solution has no crossing segments, essentially all approximation algorithms, heuristics, and PTASs may produce solutions $S$ with crossings. Given $S$, the only procedure known to obtain a solution $S'$ without crossings and of shorter length is to perform a flip operation. In our case, a \emph{flip} consists of \emph{removing} a pair of crossing segments, and then \emph{inserting} a pair of non-crossing segments preserving a tour (and consequently reducing its length). Flips are performed in \emph{sequence} until a crossing-free tour is obtained, in a procedure called \emph{untangle}.

The same flip operation may be applied in other settings. More precisely, a \emph{flip} consists of removing a pair of crossing segments $s_1,s_2$ and inserting a pair of segments $s'_1, s'_2$ in a way that $s_1,s'_1,s_2,s'_2$ forms a cycle and a certain graph \emph{property} is preserved. In the case of TSP tours, the property is being a Hamiltonian cycle. Other properties have also been studied, such as spanning trees, perfect matchings, and multigraphs. Notice that flips preserve the degrees of all vertices and multiple copies of the same edge may appear when we perform a flip on certain graphs.

When the goal is to obtain a crossing-free TSP tour, we are allowed to \emph{choose} which pair of crossing segments to remove in order to perform fewer flips, which we call \emph{removal choice} (Figure~\ref{fig:sequence}(a)). Notice that, in a tour, choosing which pair of crossing edges we remove defines which pair of crossing edges we insert. However, this is not the case for matchings and multigraphs. There, we are also allowed to choose which pair of segments to insert among two possibilities, which we call \emph{insertion choice} (Figure~\ref{fig:sequence}(b)).

\begin{figure}[htb]
 \centering
 \hfill
 \includegraphics[scale=\graphicsScale,page=1]{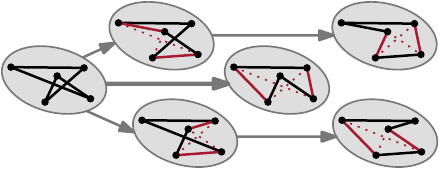} \hfill\hfill
 \includegraphics[scale=\graphicsScale,page=2]{sequence2}\hfill\phantom{.}\\
 \hfill(a)\hfill\hfill(b)\hfill\phantom{.}
 \caption{(a) Three untangle sequences for a tour with different \emph{removal choices}. (b) Three untangle sequences for a matching with different \emph{insertion choices}. We highlight the segments removed and inserted at each flip.}
 \label{fig:sequence}
\end{figure}

Using removal or insertion choices to obtain shorter flip sequences has not been explicitly studied before and opens several new questions, while unifying the solution to multiple reconfiguration problems. Next, we describe previous work according to which choices are used. Throughout, $P$ denotes the set of points and $n$ the number of segments.

\para{Using no choice:} Van Leeuwen et al.~\cite{VLSC81} showed that the \emph{length} (i.e. the number of flips) of any untangle sequence for a TSP tour is $\OO(n^3)$ and it is easy to construct $\Omega(n^2)$ examples. The same proof has been rediscovered in the context of matchings~\cite{BoM16} after 35 years. If $P$ is in convex position, then the number of crossings decreases at each flip, which gives a tight bound of $\Theta(n^2)$. If all points except the endpoints of $t$ segments are in convex position, then the authors~\cite{DGR23} recently showed a bound of $\OO(tn^2)$.

\para{Using only insertion choice:} Bonnet et al.~\cite{BoM16} showed that using only insertion choice, it is possible to untangle a matching using $\OO(n^2)$ flips. Let $\sigma$ be the \emph{spread} of $P$, that is, the ratio between the maximum and minimum distances among points in $P$. Using insertion choice, it is also possible to untangle a matching using $\OO(n \sigma)$ flips~\cite{BMS19}.

\para{Using only removal choice:} If $P$ is in convex position, then by using $\OO(n)$ flips we can untangle a TSP tour~\cite{OdW07,WCL09}, as well as a red-blue matching~\cite{BMS19}, while the best known bound for trees is $\OO(n \log n)$~\cite{BMS19}.
If instead of convex position, we have colinear red points in a red-blue matching, then $\OO(n^2)$ flips suffice~\cite{BMS19,DDFGR22}.

\para{Using both removal and insertion choices:} If $P$ is in convex position, then by using $\OO(n)$ flips we can untangle a matching~\cite{BMS19}.

\subsection{New Results}

Previous results are usually stated for a single graph property. Using choices, we are able to state the results in a more general setting. Proofs that use insertion choice are unlikely to generalize to red-blue matchings, TSP tours, or trees, where insertion choice is not available (still, they may hold for both non-bipartite matchings and multigraphs). In contrast, bounds for multigraphs using only removal choice apply to all these cases.
Previously, we only knew linear or near-linear bounds when the points $P$ are in convex position and removal choice is available.
The goal of the paper is to obtain linear and near-linear bounds to as many cases as possible, considering near-convex configurations as well as removal and insertion choices. 

Let $P = C \cup T$ where $C$ is in convex position and the points of $T$ are outside the convex hull of $C$, unless otherwise specified. Let $S$ be a multiset of $n$ segments with endpoints $P$ and $t$ be the number of segments with at least one endpoint in $T$. We prove the following results to untangle $S$, and some are summarized in Table~\ref{tab:results}.

\para{Using only insertion choice (Section~\ref{sec:Insertion}):} If $T=\emptyset$, then $\OO(n \log n)$ flips suffice. If $T$ is separated from $C$ by two parallel lines, then $\OO(t n \log n)$ flips suffice.

\para{Using only removal choice (Section~\ref{sec:Removal}):} If $|T| \leq 2$ and $t = \OO(1)$, then $\OO(n \log n)$ flips suffice. In this case, our results hold with the points $T$ being anywhere with respect to the convex hull of $C$. As the bounds hold for trees, it is useful to compare them against the $\OO(n \log n)$ bound for trees from~\cite{BMS19} that strongly uses the fact that $S$ forms a tree.
The $\OO(\log n)$ factor is not present for the special cases of TSP tours and red-blue matchings.

\para{Using both removal and insertion choices (Section~\ref{sec:RemovalInsertion}):} If $T$ is separated from $C$ by two parallel lines, then $\OO(t n)$ flips suffice. If $T$ is anywhere outside the convex hull of $C$ and $S$ is a matching, then $\OO(t^3 n)$ flips suffice.

\begin{table}[htb]
    \centering
    \caption{Upper bounds to different versions of the problem with points having $\OO(1)$ degree. The letter R corresponds to removal choice, I to insertion choice, and $\emptyset$ to no choice. New results are highlighted in yellow with the theorem number in parenthesis and tight bounds are bold.}
    \label{tab:results}
    \newcommand{\mc}[2]{\multicolumn{#1}{l|}{#2}}
    \begin{tabular}{|l||l|l|l|l|l|l|}\cline{1-7}
    Property: & \mc{4}{Matching} & \mc{2}{TSP, Red-Blue}\\\hline
    Choices: & RI & I & R & $\emptyset$ & R & $\emptyset$\\\hline\hline
    Convex & $\mathbf{n}$~\cite{BMS19} &  \cellcolor{yellow!50}$n \log n$ (\ref{thm:convexI}) & \cellcolor{yellow!50}$n \log n$ (\ref{thm:convexR})& $\mathbf{n^2}$ & $\mathbf{n}$ \cite{BMS19,OdW07,WCL09} & $\mathbf{n^2}$ \\\hline
    $|T| = 1$ & \cellcolor{yellow!50}$\mathbf{n}$ (\ref{thm:separatedRI}) & \cellcolor{yellow!50}$n \log n$ (\ref{thm:separatedI}) & \cellcolor{yellow!50}$n \log n$ (\ref{thm:1InsideOutsideR}) & $\mathbf{n^2}$~\cite{DGR23} & \cellcolor{yellow!50}$\mathbf{n}$  (\ref{thm:1InsideOutsideR}) & $\mathbf{n^2}$~\cite{DGR23} \\\hline
    $|T| = 2$ & \cellcolor{yellow!50}$\mathbf{n}$ (\ref{thm:nearConvexRI}) & $n^2$~\cite{BoM16} & \cellcolor{yellow!50}$n \log n$  (\ref{thm:2OutsideR}) & $\mathbf{n^2}$~\cite{DGR23} & \cellcolor{yellow!50}$\mathbf{n}$  (\ref{thm:2OutsideR}) & $\mathbf{n^2}$~\cite{DGR23} \\\hline
    separated & \cellcolor{yellow!50}$tn$ (\ref{thm:separatedRI}) & \cellcolor{yellow!50}$tn \log n$ (\ref{thm:separatedI})& \mc{4}{$tn^2$~\cite{DGR23}} \\\hline
    $C \cup T$ & \cellcolor{yellow!50}$t^3n$ (\ref{thm:nearConvexRI})& $n^2$~\cite{BoM16} & \mc{4}{$tn^2$~\cite{DGR23}} \\\hline
    
    \end{tabular}
\end{table}

In a matching or TSP tour, we have $t = \OO(|T|)$ and $n = \OO(|P|)$, however in a tree, $t$ can be as high as $\OO(|T|^2)$. In a multigraph $t$ and $n$ can be much larger than $|T|$ and $|P|$. The theorems describe more precise bounds as functions of all these parameters. For simplicity, the introduction only shows bounds in terms of only $n$ and $t$.

\subsection{Related Reconfiguration Problems}

Combinatorial reconfiguration studies the step-by-step transition from one solution to another, for a given combinatorial problem. Many reconfiguration problems are presented in~\cite{Heu13}. We give a brief overlook of reconfiguration among line segments using alternative flip operations.

The \emph{2OPT flip} is not restricted to crossing segments. It removes and inserts pairs of segments (the four segments forming a cycle) as the total length decreases. In contrast to flips among crossing segments, the number of 2OPT flips performed may be exponential~\cite{ERV14}. 

It is possible to relax the flip definition even further to all operations that replace two segments by two others forming a cycle~\cite{BeI08,BeI17,BBH19,BJ20,EKM13,Wil99}. This definition has also been considered for multigraphs~\cite{Hak62,Hak63,Joff20}. 

Another type of flip consists of removing a single segment and inserting another one. 
Such flips are widely studied for triangulations~\cite{AMP15,HNU99,Law72,LuP15,NiN18,Pil14}.
They have also been considered for non-crossing trees~\cite{ABDK22} and paths. It is possible to reconfigure any two non-crossing paths if the points are in convex position~\cite{AIM07,ChWu09} or if there is \emph{one} point inside the convex hull~\cite{AKLM23}.

\subsection{Preliminaries}
\label{sec:preliminariesBackup}

Throughout, we consider \emph{multigraphs} $(P,S)$ whose vertices $P$ (called \emph{endpoints}) are points in the plane and edges $S$ are a multiset of line \emph{segments}.
We assume that the endpoints are in \emph{general position} and that the two endpoints of a segment are distinct.
Given two (possibly equal) sets $P_1,P_2$ of endpoints, we say that a segment is a \emph{$P_1P_2$-segment} if one endpoint is in $P_1$ and the other is in $P_2$. Similarly, we say that a segment is a \emph{$P_1$-segment} if at least one endpoint is in $P_1$. 

We say that two segments \emph{cross} if they intersect at a single point that is not an endpoint of either segment. We say that a line \emph{crosses} a segment if they intersect at a single point that is not an endpoint of the segment.
We say that a segment or a line $h$ \emph{separates} a set of points $P$ if $P$ can be partitioned into two non-empty sets $P_1,P_2$ such that every segment $p_1p_2$ with $p_1 \in P_1, p_2 \in P_2$ crosses $h$.
Given a set of segments $S$, the \emph{line potential} $\lambda(\ell)$ is the number of segments of $S$ crossed by $\ell$.
Several proofs in this paper use the following two lemmas from previous papers.

\begin{lemma}[\cite{VLSC81}]
  \label{lem:lambda}
  Given a multiset $S$ of segments and a line $\ell$, let $\lambda(\ell)$ be the number of segments in $S$ crossing $\ell$. Then, $\lambda(\ell)$ never increases at a flip.
\end{lemma}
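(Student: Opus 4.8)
The plan is to \emph{localize} the effect of a flip. A flip removes only two segments $s_1,s_2$ and inserts only two segments $s_1',s_2'$, leaving every other segment of $S$ untouched, so $\lambda(\ell)$ changes by exactly the number of inserted segments crossed by $\ell$ minus the number of removed segments crossed by $\ell$. Hence it suffices to show that, for every line $\ell$, the two inserted segments are crossed by $\ell$ at most as many times as the two removed ones. Throughout I would use the elementary fact that a segment is crossed by $\ell$ if and only if its two endpoints lie on opposite open sides of $\ell$.

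Write $a,b,c,d$ for the four endpoints shared by the removed and inserted segments; before the flip they carry the perfect matching $\{s_1,s_2\}$ and afterwards a different perfect matching $\{s_1',s_2'\}$. First I would record that since $s_1,s_2$ cross, the four points are in convex position and $s_1,s_2$ are precisely the two diagonals of the convex quadrilateral they span. The argument then splits on how $\ell$ partitions $\{a,b,c,d\}$ into the points lying on each of its two sides, and the crucial observation is that in the only delicate case the removed (crossing) pair already realizes the maximum possible number of crossings.

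The case analysis runs as follows. If all four points lie on one side of $\ell$, then no segment among them is crossed, so both matchings contribute $0$. If $\ell$ isolates exactly one point on one side, then in \emph{any} perfect matching exactly the unique segment incident to that lone point is crossed, so both matchings contribute $1$; the count is unchanged. The remaining case is the balanced split with two points on each side. Here I would invoke the fact that a halfplane meets a set of points in convex position in a set of vertices that is contiguous along the convex hull; thus the two points on a given side of $\ell$ are \emph{adjacent} on the quadrilateral and therefore never form a pair of opposite (diagonal) vertices. Consequently each diagonal joins a point on one side of $\ell$ to a point on the other, so both removed segments $s_1,s_2$ are crossed and the removed count equals $2$. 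Since any two inserted segments can be crossed at most twice, the inserted count is at most $2$, and $\lambda(\ell)$ does not increase.

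The main obstacle is exactly this balanced case, and its resolution rests on the ``contiguous arc'' property of a halfplane meeting a convex point set, which forces both diagonals of the crossing quadrilateral to meet $\ell$; note that this makes the argument robust, since it uses nothing about which matching is inserted. A minor technical point is the degenerate situation in which an endpoint happens to lie on $\ell$, so that an incident segment is not counted as crossed; I would dispose of it by a direct check of these boundary subcases, or by a limiting argument that perturbs $\ell$ slightly off the endpoints, neither of which affects the counts above.
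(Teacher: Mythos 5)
Your proof is correct, and it is essentially the standard argument for this fact: the paper does not prove Lemma~\ref{lem:lambda} itself but cites it from~\cite{VLSC81}, where the same localization to the four shared endpoints and the case analysis on how $\ell$ splits them (with the $2$--$2$ case resolved by convexity forcing both diagonals to cross $\ell$) is the expected proof. No gaps; the handling of the degenerate incidences is consistent with the paper's definition of a line crossing a segment.
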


\begin{lemma}[\cite{BoM16}]
  \label{lem:split}
Consider a partition $S=\bigcup_i S_i$ of the multiset $S$ of segments and let $P_i$ be the set of endpoints of $S_i$. If no segment of $\binom{P_i}{2}$ crosses a segment of $\binom{P_j}{2}$ for $i \neq j$, then the sequences of flips in each $S_i$ are independent.
\end{lemma}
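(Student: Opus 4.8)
The plan is to prove Lemma~\ref{lem:split} by showing that a flip confined to one part $S_i$ cannot be obstructed or affected by segments living in a different part $S_j$, so the flip sequences decouple. First I would unpack the hypothesis: the condition that no segment of $\binom{P_i}{2}$ crosses a segment of $\binom{P_j}{2}$ for $i\neq j$ means that \emph{every} potential segment spanned by the endpoints of one part avoids \emph{every} potential segment spanned by a different part, not merely the segments actually present in $S$. This is the crucial strengthening that makes the decomposition stable under flips: after any number of flips inside $S_i$, the new segments are still elements of $\binom{P_i}{2}$ (since a flip reinserts segments whose endpoints are among the four original endpoints, all lying in $P_i$), and by hypothesis these still cannot cross anything drawn from $P_j$.

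The key observation to carry out next is that a crossing pair of segments eligible for a flip must come from the same part. Suppose $s,s'$ cross. Writing $s\in\binom{P_i}{2}$ and $s'\in\binom{P_j}{2}$ for the parts to which their endpoints belong, the non-crossing hypothesis forces $i=j$; otherwise $s$ and $s'$ would be a crossing pair from distinct parts, contradicting the assumption. Hence every flip operates entirely within a single $S_i$: both removed segments lie in $S_i$, and both inserted segments, having endpoints in $P_i$, again lie in $\binom{P_i}{2}$ and are added to $S_i$. Therefore the multiset $S_i$ is closed under the flip operation, and the partition $S=\bigcup_i S_i$ is preserved throughout any flip sequence.

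From closure, independence follows formally. Since no flip ever removes, inserts, or is enabled by a segment outside its own part, the state of $S_j$ is completely unchanged by any flip performed in $S_i$ for $i\neq j$. Thus an untangle sequence on $S$ projects to independent untangle sequences on each $S_i$, and conversely any family of flip sequences on the $S_i$ can be interleaved arbitrarily to yield a valid flip sequence on $S$; in particular the total number of flips is the sum of the flips performed in each part, and $S$ is crossing-free exactly when every $S_i$ is. The only subtlety to state carefully is that one must also verify the inserted pair of a flip within $S_i$ genuinely stays inside part $i$ and cannot create a crossing with some $S_j$ — but this is immediate from the hypothesis being phrased over all of $\binom{P_i}{2}$ rather than over the current segments, so no segment reachable by flips in part $i$ can ever interact with part $j$.

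I expect the main (and really the only) obstacle to be conceptual rather than technical: one must resist reading the hypothesis as a statement only about the segments currently in $S$, and instead recognize that it quantifies over all segments spannable by the endpoint sets. Getting this quantifier right is exactly what guarantees the partition remains a valid decomposition after flips; the rest of the argument is a short closure-and-independence bookkeeping with no calculation involved.
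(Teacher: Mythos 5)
Your proof is correct, and it identifies precisely the point that makes the lemma work — the hypothesis quantifies over all segments spannable by $\binom{P_i}{2}$, not just those currently in $S$, which is what makes each part closed under flips. The paper itself gives no proof (it cites the lemma from Bonnet et al.), and your closure-plus-interleaving argument is the standard one intended there.
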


We say that a segment $s$ is \emph{uncrossable} if for any two endpoints $p_1,p_2$, we have that $p_1p_2$ do not cross $s$. Lemma~\ref{lem:split} implies that an uncrossable segment cannot be flipped.

Our bounds often have terms like $\OO(tn)$ and $\OO(n \log |C|)$ that would incorrectly become $0$ if $t$ or $\log |C|$ is $0$. In order to avoid this problem, factors in the $\OO$ notation should be made at least 1. For example, the aforementioned bounds should be respectively interpreted as $\OO((1+t) n)$ and $\OO(n \log (2+|C|))$.

\subsection{Techniques}

To prove our results, we combine previous and new \emph{potential functions} with refined strategies and analysis. Van Leeuwen et al.~\cite{VLSC81} as well as Bonnet et al.~\cite{BoM16} consider $\lambda(\ell)$ for a set $L$ of all $\OO(|P|^2)$ lines defined by $P$. Since there always exists a line in $L$ whose potential decreases at a flip, we obtain the $\OO(|P|^2n) = \OO(n^3)$ classical bound without any choice.
Bonnet et al.~\cite{BoM16} show that a set $L$ of $|P|-1$ parallel lines (with one point between two consecutive lines) suffices with insertion choice. Since there is always an insertion choice that makes some $\lambda(\ell)$ decrease for $\ell \in L$, the $\OO(|P|n) = \OO(n^2)$ bound follows.

In order to avoid a quadratic dependency in $n$, new line potentials have to be introduced with careful removal and/or insertion choices.
For example, to prove Theorem~\ref{thm:nearConvexRI}, we have to perform several flips in order to find a line $\ell$ with $\lambda(\ell) = \OO(t)$ before applying the line potential argument. In contrast, to prove Theorem~\ref{thm:separatedI}, we modify the line potential to only count the $t$ $T$-segments. However, with this change the line potential may increase, which we need to handle properly.

Another key potential, inspired by~\cite{BMS19} and used for the convex case is the \emph{depth potential} $\delta(p_ap_b)$ of a segment $p_ap_b$, defined as the number of points between $p_a$ and $p_b$ along the convex hull boundary with a given orientation. 
Careful removal and insertion choices as well as adaptations of this potential had to be made in order to guarantee that the potential decreases during most flips and never increases by too much. For example, to prove Theorems~\ref{thm:convexI} and~\ref{thm:separatedI}, we had to consider the product of the depth, instead of the usual sum. To prove Theorem~\ref{thm:convexR}, we had to modify the depth potential to only count endpoints of segments that have crossings, which we call the \emph{crossing depth} $\delta_{\times}(p_ap_b)$.

In the convex case, the number of crossings decreases at each flip, which implies the trivial $\binom{n}{2}$ upper bound. However, the number of crossings may increase when the points are not in convex position. An analysis of the number of crossings is used to bound the number of flips in the proof of Theorem~\ref{thm:2InsideR}.

Finally, we use the concept of \emph{splitting} from~\cite{BoM16}, presented in Lemma~\ref{lem:split}. The difficulty of splitting is to obtain the disjoint sets required by the lemma. For example, in Theorem~\ref{thm:nearConvexRI}, we untangle segments with both endpoints in $C$ last to obtain the desired separation. In Theorem~\ref{thm:2OutsideR}, we carefully find lines that split the original problem into problems with a smaller value of $t$ that are solved recursively. The special case of uncrossable segments is used in Theorems~\ref{thm:1InsideOutsideR} and \ref{thm:2OutsideR}.

\section{Insertion Choice}
\label{sec:Insertion}

In this section, we show how to untangle a multigraph using only insertion choice, that is, our strategies do not choose which pair of crossing segments is removed, but only which pair of segments with the same endpoints is subsequently inserted. We start with the convex case, followed by points outside the convex separated by two parallel lines.

\subsection{Convex}

Let $P = C = \{p_1,\ldots,p_{|C|}\}$ be a set of points in convex position sorted in counterclockwise order along the convex hull boundary (Figure~\ref{fig:convexI}(a)). Given a segment $p_ap_b$, we define the \emph{depth} $\delta(p_ap_b) = |b-a|$. This definition resembles but is not the same as the depth used in~\cite{BMS19}. We use the depth to prove the following theorem.

\begin{figure}[htb]
 \centering
 \hfill
 \includegraphics[scale=\graphicsScale,page=1]{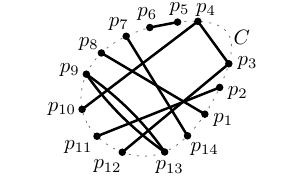}\hfill\hfill
 \includegraphics[scale=\graphicsScale,page=2]{convexI}\hfill\hfill
 \includegraphics[scale=\graphicsScale,page=3]{convexI}\hfill\phantom{.}\\
 \hfill(a)\hfill\hfill(b)\hfill\hfill(c)\hfill\phantom{.}
 \caption{(a) A multigraph $(C,S)$ with $|C|=14$ points in convex position and $n=9$ segments. (b) Insertion choice for Case~1 and 2 of the proof of Theorem~\ref{thm:convexI}. (c) Insertion choice for Case~3.}
 \label{fig:convexI}
\end{figure}

\begin{theorem} \label{thm:convexI}
Every multigraph $(C,S)$ with $C$ in convex position has an untangle sequence of length $\OO(n \log |C|) = \OO(n \log n)$ using only insertion choice, where $n = |S|$.
\end{theorem}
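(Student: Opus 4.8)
The plan is to track a \emph{multiplicative} potential built from the depths. Define $\Phi = \prod_{s \in S} \delta(s)$, and observe that since every depth $\delta(s) = |b-a|$ is an integer in $\{1,\ldots,|C|-1\}$, we have $1 \leq \Phi \leq (|C|-1)^n$ at all times. I would prove that for \emph{every} crossing pair there is an insertion choice after which $\Phi$ drops by a factor of at least $2$. Telescoping then bounds the number of flips: since $\Phi$ starts at most $(|C|-1)^n$, stays at least $1$, and is multiplied by at most $\tfrac12$ at each flip, the length of the sequence is at most $\log_2 \Phi_{\mathrm{init}} \leq n\log_2(|C|-1)$. Because each point is an endpoint of a segment, $|C| = \OO(n)$, so this is $\OO(n\log|C|) = \OO(n\log n)$.

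The heart is a single-flip computation. A flip removes two crossing segments; with $C$ sorted counterclockwise their endpoints interleave, so after relabelling they are $p_a,p_c,p_b,p_d$ with $a<c<b<d$. Writing $x=c-a$, $y=b-c$, $z=d-b$ (all $\geq 1$), the removed depths multiply to $(b-a)(d-c) = (x+y)(y+z)$. The two admissible insertions are $\{p_ap_c,\,p_bp_d\}$, with depth-product $B = xz$, and $\{p_bp_c,\,p_ap_d\}$, with depth-product $A = y(x+y+z)$ (see Figure~\ref{fig:convexI}(b),(c)). The key identity is $A+B = (x+y)(y+z)$: the two candidate products sum to exactly the removed product. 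Hence $\min(A,B) \leq \tfrac12(A+B)$, so choosing the insertion with the smaller depth-product multiplies $\Phi$ by at most $\tfrac12$, as desired.

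To make this a legal flip I would also check that each candidate pair is non-crossing (they are precisely the two non-crossing perfect matchings of four points in convex position) and that each strictly shortens the total length — the standard convex-quadrilateral inequality, namely that the sum of the two diagonals exceeds the sum of either pair of opposite sides, obtained by applying the triangle inequality at the point where the diagonals meet. Since both properties hold for \emph{both} options, we are always free to select whichever has the smaller depth-product, and the argument never invokes removal choice: it works for whatever crossing pair is presented.

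I expect the friction to be twofold. First, the choice of potential is essential: the natural sum $\sum_s \delta(s)$ only decreases additively (by $2y$ per flip) and is bounded by $n(|C|-1)$, which merely reproduces the $\OO(n^2)$ convex bound; the improvement to $\OO(n\log n)$ genuinely needs the product (equivalently $\sum_s \log_2\delta(s)$, which drops by at least $1$ per flip) so that the guaranteed factor-$2$ decrease telescopes logarithmically. Second, the configuration bookkeeping: one must confirm that every crossing yields the interleaving $a<c<b<d$ and that the two non-crossing pairings decompose as $A$ and $B$ with $A+B$ equal to the removed product, which is where a short case analysis on the relative order of the four endpoints (the cases of Figure~\ref{fig:convexI}) enters. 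Everything past that is routine.
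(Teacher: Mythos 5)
Your proof is correct and takes essentially the same route as the paper's: the same product-of-depths potential $\phi(S)=\prod_{s\in S}\delta(s)\leq |C|^n$, with an insertion choice guaranteeing a constant-factor multiplicative drop at every flip. Your identity $A+B=(x+y)(y+z)$ is a cleaner substitute for the paper's three-case analysis and even sharpens the guaranteed factor from the paper's $3/4$ to $1/2$.
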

\begin{proof}
Let the potential function
\[\phi(S) = \prod_{s \in S} \delta(s).\]
As $\delta(s) \in \{1,\ldots,|C|-1\}$, we have that $\phi(S)$ is integer, positive, and at most $|C|^n$. Next, we show that for any flipped pair of segments $p_ap_b,p_cp_d$ there exists an insertion choice that multiplies $\phi(S)$ by a factor of at most $3/4$, and the theorem follows.

Consider a flip of a segment $p_ap_b$ with a segment $p_cp_d$ and assume without loss of generality that $a < c < b < d$.
The contribution of the pair of segments $p_ap_b,p_cp_d$ to the potential $\phi(S)$ is the factor $f=\delta(p_ap_b)\delta(p_cp_d)$.
Let $f'$ be the factor corresponding to the pair of inserted segments.

\textbf{Case~1:} If $\delta(p_ap_c) \leq \delta(p_cp_b)$, then we insert the segments $p_ap_c$ and $p_bp_d$ and we get $f'=\delta(p_ap_c)\delta(p_bp_d)$ (Figure~\ref{fig:convexI}(b)).
We notice $\delta(p_ap_b)=\delta(p_ap_c)+\delta(p_cp_b)$. It follows $\delta(p_ap_c) \leq \delta(p_ap_b)/2$ and we have $\delta(p_bp_d) \leq \delta(p_cp_d)$ and then $f'\leq f/2$. 

\textbf{Case~2:} If $\delta(p_bp_d) \leq \delta(p_cp_b)$, then we insert the same segments $p_ap_c$ and $p_bp_d$ as previously. We have $\delta(p_ap_c) \leq \delta(p_ap_b)$ and $\delta(p_bp_d)\leq \delta(p_cp_d)/2$, which gives $f'\leq f/2$.

\textbf{Case~3:} If (i) $\delta(p_ap_c) > \delta(p_cp_b)$ and (ii) $\delta(p_bp_d) > \delta(p_cp_b)$, then we insert the segments $p_ap_d$ and $p_cp_b$ (Figure~\ref{fig:convexI}(c)). 
The contribution of the new pair of segments is $f'=\delta(p_ap_d)\delta(p_cp_b)$.
We introduce the coefficients $x=\frac{\delta(p_ap_c)}{\delta(p_cp_b)}$ and $y=\frac{\delta(p_bp_d)}{\delta(p_cp_b)}$ so that $\delta(p_ap_c) = x\delta(p_cp_b)$ and $\delta(p_bp_d) = y\delta(p_cp_b)$. It follows that $\delta(p_ap_b) = (1+x)\delta(p_cp_b)$, $\delta(p_cp_d)=(1+y)\delta(p_cp_b)$ and $\delta(p_ap_d) = (1+x+y)\delta(p_cp_b)$. The ratio $f'/f$ is equal to a function $g(x,y) = \frac{1+x+y}{(1+x)(1+y)}$. Due to (i) and (ii), we have that $x\geq 1$ and $y \geq 1$. 
In other words, we can upper bound the ratio $f'/f$ by the maximum of the function $g(x,y)$ with $x,y \geq 1$. It is easy to show that the function $g(x,y)$ is decreasing with both $x$ and $y$. Then its maximum 
is obtained for $x=y=1$ and it is equal to $3/4$, showing that $f'\leq 3f/4$.
\end{proof}

\subsection{Separated by Two Parallel Lines}

In this section, we prove the following theorem, which is a generalization of Theorem~\ref{thm:convexI}.

\begin{figure}[htb]
 \centering
 \hfill
 \includegraphics[scale=\graphicsScale,page=1]{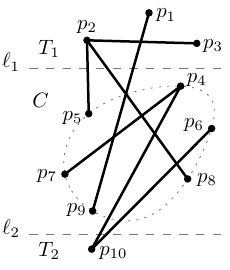} \hfill\hfill
 \includegraphics[scale=\graphicsScale,page=2]{separatedI} \hfill\phantom{.}\\
 \hfill(a)\hfill\hfill(b)\hfill\hfill\phantom{.}
 \caption{(a) Statement of Theorem~\ref{thm:separatedI}. (b) Some insertion choices in the proof of Theorem~\ref{thm:separatedI}.}
 \label{fig:separatedI}
\end{figure}

\begin{theorem} \label{thm:separatedI}
Consider a multigraph $(P,S)$ with $P = C \cup T_1 \cup T_2$ where $C$ is in convex position and there exist two horizontal lines $\ell_1,\ell_2$, with $T_1$ above $\ell_1$ above $C$ above $\ell_2$ above $T_2$.
Let $T = T_1 \cup T_2$, $n = |S|$, and $t$ be the number of $T$-segments.
There exists an untangle sequence of length $\OO(t |P| \log |C| + n \log |C|) = \OO(tn \log n)$ using only insertion choice.
\end{theorem}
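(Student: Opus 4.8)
The plan is to extend the product-of-depth argument of Theorem~\ref{thm:convexI} with a second potential that monitors only the $T$-segments. I keep the convex depth $\delta(p_ap_b)=|a-b|$ for every segment with both endpoints in $C$ and let $\phi(S)=\prod_{s}\delta(s)$ range over those $CC$-segments, so that $\phi\in[1,|C|^{n}]$. For the $T$-segments I exploit the two parallel lines: since $T_1$, $C$, $T_2$ form three consecutive blocks in the vertical order, I place one horizontal line in each of the $\OO(|P|)$ gaps between vertically consecutive points, let $\lambda_T(\ell)$ count only the $T$-segments met by $\ell$, and set $\Lambda_T=\sum_\ell \lambda_T(\ell)$, the total vertical span of the $T$-segments. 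Two structural facts justify this setup: every $CC$-segment has vertical span at most $|C|-1$, and the number of $T$-segments never exceeds the flip-invariant total degree of the points in $T$, which is $\OO(t)$, so $\Lambda_T=\OO(t|P|)$ throughout. I then merge the two into $\Phi(S)=\phi(S)\cdot K^{\Lambda_T}$ with $K=\Theta(|C|)$, giving $\log\Phi=\OO\!\big((n+t|P|)\log|C|\big)$; if every flip shrank $\Phi$ by a constant factor, the stated bound would follow at once.

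To produce the insertion choice I would split on how many of the four endpoints $p_a,p_b,p_c,p_d$ lie in $T$. With none in $T$ the flip is purely convex, and the three-case analysis of Theorem~\ref{thm:convexI} already multiplies $\phi$ by at most $3/4$ while leaving $\Lambda_T$ fixed. When $T$-endpoints appear, the crossing pair are the two diagonals of the convex quadrilateral they span, so both insertion options are non-crossing; in particular the option joining the two vertically lowest and the two vertically highest endpoints is always available, lies in two disjoint horizontal slabs, and minimizes the total vertical span used by the four points. In the clean cases of three and four $T$-endpoints all four segments are $T$-segments, so this choice directly lowers $\Lambda_T$ by at least $2$ and hence $\Phi$, with $\phi$ untouched; and because $K\ge|C|$, a unit drop in $\Lambda_T$ outweighs the at most $(|C|-1)^2$ factor by which any single flip can change $\phi$.

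The main obstacle is the mixed cases of one or two $T$-endpoints, where a $CC$-segment and a $T$-segment are interchanged and $\phi$ and $\Lambda_T$ move in opposition; worst of all, flipping a $TT$-segment against a $CC$-segment forces both insertion options to replace them by two $TC$-segments, so the $T$-segment count rises, $\Lambda_T$ may grow, and no insertion choice prevents it. Reconciling these increases with an essentially monotone $\Phi$ is the crux, and I expect the real work to lie here. The tools I would bring to bear are: (i) Lemma~\ref{lem:lambda}, which says the unrestricted line potential never increases, so the span transferred into the $T$-segments at such a flip is at most the span $\le|C|-1$ of the destroyed $CC$-segment, capping the per-flip growth of $\log\Phi$ at $\OO(|C|\log|C|)$; (ii) a charging argument bounding the number of these conversions, each of which destroys a $TT$-segment; and (iii) the slab-separation fact that a $TT$-segment with both endpoints in the same outer block ($T_1T_1$ or $T_2T_2$) lies strictly above $\ell_1$ or below $\ell_2$ and thus crosses no $CC$-segment, so such newly created segments cannot feed further increases. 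The delicate bookkeeping is to show that the total increase of $\log\Phi$ over the whole sequence stays within $\OO(t|P|\log|C|)$, so that the constant-factor decreases of the remaining flips bound their number by $\OO(n\log|C|+t|P|\log|C|)$.
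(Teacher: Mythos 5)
Your setup coincides with the paper's: the gap-line sum $\Lambda_T$ is exactly the potential $\eta_T(S)=\sum_{p_ip_j}|i-j|$ over $T$-segments that the paper uses, and $\phi$ is the same product of depths. The gap is that you stop at what you call the crux instead of resolving it, and the obstruction you place there is not an obstruction. Take the $TT\times CC$ flip you single out as the worst case. By the slab separation, a segment with both endpoints in $T_1$ (or both in $T_2$) cannot cross a $CC$-segment, so the $TT$-segment must join $T_1$ to $T_2$; in the vertical order it is therefore $p_ap_d$ and the $CC$-segment is $p_bp_c$ with $a<b<c<d$. Inserting $p_ap_b$ and $p_cp_d$ replaces one $T$-segment of span $d-a$ by two $T$-segments of total span $(b-a)+(d-c)=(d-a)-(c-b)$, a strict integer decrease of $\Lambda_T$. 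You conflated the \emph{number} of $T$-segments (which indeed rises from one to two here) with their \emph{total span} (which falls). The same computation goes through in every case with at least one $T$-endpoint, because the two parallel lines force the $T$-indices among $\{a,b,c,d\}$ to form a prefix (in $T_1$) and a suffix (in $T_2$) of the vertical order, and because $p_ap_b,p_cp_d$ lie in disjoint horizontal slabs and hence can never be the removed crossing pair, so that insertion choice is always available. Thus $\Lambda_T$ is a positive integer below $t|P|$ that strictly decreases at every flip touching $T$, which immediately gives the $\OO(t|P|)$ bound on such flips; none of your proposed tools (i)--(iii) or the ``delicate bookkeeping'' is needed.

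A secondary flaw is the merged potential $\Phi=\phi\cdot K^{\Lambda_T}$ with $K=\Theta(|C|)$: a unit drop of $\Lambda_T$ contributes a factor $1/K=\Theta(1/|C|)$, which does not outweigh the $(|C|-1)^2$ growth you grant to $\phi$ at the same flip, so $\Phi$ need not decrease. The paper keeps the two potentials separate: the $\OO(t|P|)$ flips touching $T$ each inflate $\phi$ by a factor $|C|^{\OO(1)}$, so $\phi$ stays below $|C|^{\OO(n+t|P|)}$ over the whole sequence, and the $3/4$ decay at each pure-$CC$ flip (your Case analysis from Theorem~\ref{thm:convexI}) then bounds the number of those flips by $\OO((n+t|P|)\log|C|)$.
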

\begin{proof}
We start by describing the insertion choice for flips involving at least one point in $T$.
Let $p_1,\ldots,p_{|P|}$ be the points $P$ sorted vertically from top to bottom.
Consider a flip involving the points $p_a,p_b,p_c,p_d$ with $a<b<c<d$. The insertion choice is to create the segments $p_ap_b$ and $p_cp_d$. See Figure~\ref{fig:separatedI}(b). As in~\cite{BoM16}, we define the potential $\eta$ of a segment $p_ip_j$ as
\[\eta(p_ip_j) = |i-j|.\]
Notice that $\eta$ is an integer between $1$ and $|P|-1$. We define $\eta_T(S)$ as the sum of $\eta(p_ip_j)$ for $p_ip_j \in S$ with $p_i$ or $p_j$ in $T$. Notice that $0 < \eta_T(S) < t |P|$. It is easy to verify that any flip involving a point in $T$ decreases $\eta_T(S)$ and other flips do not change $\eta_T(S)$. Hence, the number of flips involving at least one point in $T$ is $\OO(t|P|)$.

For the flips involving only points of $C$, we use the same choice as in the proof of Theorem~\ref{thm:convexI}.
The potential function 
 \[\phi(S) = \prod_{p_ip_j \in S \;:\; p_i\in C \text{ and } p_j \in C} \delta(p_ip_j)\]
is at most $|C|^n$ and decreases by a factor of at most $3/4$ at every flip that involves only points of $C$.

However, $\phi(S)$ may increase by a factor of $\OO(|C|^2)$ when performing a flip that involves a point in $T$. As such flips only happen $\OO(t|P|)$ times, the total increase is at most a factor of $|C|^{\OO(t|P|)}$.

Concluding, the number of flips involving only points in $C$ is at most 
\[\log_{4/3}\left(|C|^{\OO(n)} |C|^{\OO(t|P|)} \right) = \OO(n \log |C| + t|P| \log |C|).\qedhere\]
\end{proof}

\section{Removal Choice}
\label{sec:Removal}

In this section, we show how to untangle a multigraph using only removal choice. We start with the convex case, followed by $1$ point inside or outside the convex, then $2$ points outside the convex, $2$ points inside the convex, and $1$ point inside and $1$ outside the convex. As only removal choice is used, all results also apply to red-blue matchings, TSP tours, and trees.

\subsection{Convex}
\label{sec:convexR}

Let $P = C = \{p_1,\ldots,p_{|C|}\}$ be a set of points in convex position sorted in counterclockwise order along the convex hull boundary and consider a set of segments $S$ with endpoints $P$. Given a segment $p_ap_b$ and assuming without loss of generality that $a<b$, we define the \emph{crossing depth} $\delta_\times(p_ap_b)$ as the number of points in $p_{a+1},\ldots,p_{b-1}$ that are an endpoint of a segment in $S$ that crosses any other segment in $S$ (not necessarily $p_ap_b$). We use the crossing depth to prove the following theorem, which implies a simpler and more general proof of the $\OO(n \log n)$ bound for trees~\cite{BMS19}.

\begin{figure}[htb]
 \centering
 \hfill
 \includegraphics[scale=\graphicsScale,page=1]{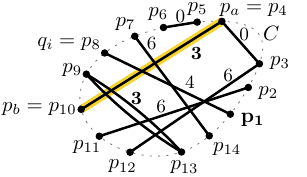} \hfill\hfill
 \includegraphics[scale=\graphicsScale,page=2]{convexR}~ \hfill\hfill
 \includegraphics[scale=\graphicsScale,page=3]{convexR}\hfill\phantom{.} \\
 \hfill(a)\hfill\hfill(b)\hfill\hfill(c)\hfill\phantom{.}
 \caption{Proof of Theorem~\ref{thm:convexR}. (a) The segments of a convex multigraph are labeled with the crossing depth. (b,c) Two possible pairs of inserted segments, with one segment of the pair having crossing depth $\floor{\frac{3}{2}}=1$.
 }
 \label{fig:convexR}
\end{figure}

\begin{theorem} \label{thm:convexR}
Every multigraph $(C,S)$ with $C$ in convex position has an untangle sequence of length $\OO(n \log |C|) = \OO(n \log n)$ using only removal choice, where $n = |S|$.
\end{theorem}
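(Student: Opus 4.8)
The plan is to mirror the product-potential argument of Theorem~\ref{thm:convexI}, replacing depth by crossing depth so that non-crossing segments contribute a trivial factor and the potential bottoms out exactly when $S$ is crossing-free. Concretely, I would set
\[\phi(S)=\prod_{s\in S}\bigl(\delta_\times(s)+1\bigr).\]
Since $0\le\delta_\times(s)\le|C|-2$, we have $1\le\phi(S)<|C|^{n}$, and $\phi(S)=1$ precisely when no endpoint is active (no crossings remain). It then suffices to show that, while a crossing exists, there is a removal choice for which \emph{both} admissible insertions multiply $\phi$ by a factor bounded away from $1$; the number of flips is then $\OO(\log_{9/8}|C|^{n})=\OO(n\log|C|)$.

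The removal choice I would use is as follows. Among the segments that cross something, pick one, $p_ap_b$ with $a<b$, of \emph{minimum} crossing depth $m=\delta_\times(p_ap_b)$, and let $c^*$ be a \emph{median} active point among $p_{a+1},\dots,p_{b-1}$, so that the numbers $\alpha,\beta$ of active points in $(a,c^*)$ and $(c^*,b)$ satisfy $\{\alpha,\beta\}=\{\floor{(m-1)/2},\ceil{(m-1)/2}\}$. Minimality guarantees that $c^*$ is the endpoint of a segment $g=p_{c^*}p_d$ that truly crosses $p_ap_b$: any crossing segment incident to $c^*$ with both endpoints inside $(a,b)$ would have crossing depth at most $m-2$, contradicting minimality. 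Reflecting the indices if needed, assume $a<c^*<b<d$, let $\gamma$ be the number of active points in $(b,d)$, and flip the pair $\{p_ap_b,g\}$.

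Two observations drive the analysis. First, in convex position no segment ever increases its number of crossings at a flip: for any fourth chord, among the three perfect matchings of the four flip endpoints the removed (crossing) matching meets it in at least as many edges as either non-crossing matching, which is a short check over the $\OO(1)$ ways the four endpoints split across the chord. Hence no point becomes newly active (the only candidates $a,b,c^*,d$ are already active before the flip), the active set is monotonically non-increasing, and every factor of $\phi$ from a segment uninvolved in the flip can only decrease. Second, since $g$ crosses $p_ap_b$, minimality gives $\delta_\times(g)=\beta+1+\gamma\ge m=\alpha+\beta+1$, i.e.\ $\gamma\ge\alpha$. The two admissible insertions are $\{p_ap_{c^*},p_bp_d\}$ and $\{p_ap_d,p_{c^*}p_b\}$; since the active set does not grow, the new crossing depths are at most $\alpha$ and $\gamma$ in the first case, and at most $\alpha+\beta+\gamma+2$ and $\beta$ in the second. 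Writing the old contribution of the flipped pair as $(\alpha+\beta+2)(\beta+\gamma+2)$ and plugging in, the constraints $\gamma\ge\alpha$ and $|\alpha-\beta|\le1$ make a direct computation give a ratio of new to old contribution at most $8/9<1$ for either insertion (the extreme case being $m\le2$).

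Thus every flip multiplies $\phi$ by at most $8/9$, and the bound $\OO(n\log|C|)=\OO(n\log n)$ follows. I expect the main obstacle to be the \emph{adversarial} insertion allowed in a multigraph: the choice $\{p_ap_d,p_{c^*}p_b\}$ produces one long segment $p_ap_d$ whose crossing depth could a priori be as large as that of $p_ap_b$, which would wipe out the geometric decrease. This is exactly what the two uses of minimality prevent — selecting $p_ap_b$ of minimum crossing depth forces the partner's far region to hold $\gamma\ge\alpha$ active points, so the long segment is always paired with the short segment $p_{c^*}p_b$ of crossing depth at most $\beta\approx m/2$, keeping the product under control.
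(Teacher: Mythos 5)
Your proof is correct, and the algorithm is exactly the paper's: flip the segment $p_ap_b$ of minimum crossing depth with a partner incident to the median active point of its arc, with minimality guaranteeing that such a partner exists. What differs is the analysis. The paper observes that, whichever of the two admissible pairs is inserted, one of the new segments (the one joining the median active point to $p_a$ or to $p_b$) has crossing depth at most half the previous minimum; iterating, after $\OO(\log|C|)$ flips some segment reaches crossing depth $0$ and can never be flipped again, and charging each such run of flips to that segment gives $\OO(n\log|C|)$. You instead carry over the product potential of Theorem~\ref{thm:convexI} with $\delta_\times+1$ in place of $\delta$ and show that both admissible insertions shrink it by a constant factor; I checked your case analysis, and the extremal ratio $8/9$, attained at $\alpha=0$, $\beta=1$, $\gamma=0$, is correct. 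Your route costs more arithmetic and needs the additional inequality $\gamma\ge\alpha$ --- a second use of minimality that the paper never needs, since its accounting simply ignores the long inserted segment $p_ap_d$ --- plus the explicit observation that factors of uninvolved segments never grow because the active set is non-increasing in convex position. What it buys is a single global potential argument uniform with Theorem~\ref{thm:convexI}, avoiding the paper's charging step and its implicit reliance on the fact that a crossing-free segment stays crossing-free to terminate each halving run. Both analyses of the same flip strategy yield the same $\OO(n\log|C|)$ bound.
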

\begin{proof}
We repeat the following procedure until there are no more crossings. Let $p_ap_b \in S$ be a segment \emph{with crossings} (hence, crossing depth at least one) and $a<b$ minimizing $\delta_\times(p_ap_b)$ (Figure~\ref{fig:convexR}(a)). Let $q_1,\ldots,q_{\delta_\times(p_ap_b)}$ be the points defining $\delta_\times(p_ap_b)$ in order and let $i = \ceil{\delta_\times(p_ap_b)/2}$. Since $p_ap_b$ has minimum crossing depth, the point $q_i$ is the endpoint of segment $q_ip_c$ that crosses $p_ap_b$. When flipping $q_ip_c$ and $p_ap_b$, we obtain a segment $s$ (either $s=q_ip_a$ or $s=q_ip_b$) with $\delta_\times(s)$ at most half of the original value of $\delta_\times(p_ap_b)$ (Figure~\ref{fig:convexR}(b,c)). Hence, this operation always divides the value of the smallest positive crossing depth by at least two. As the crossing depth is an integer smaller than $|C|$, after performing this operation $\OO(\log |C|)$ times, it produces a segment of crossing depth $0$. As the segments of crossing depth $0$ can no longer participate in a flip, the claimed bound follows.
\end{proof}

\subsection{One Point Inside or Outside a Convex}
\label{sec:1InsideOutsideR}

In this section, we prove Theorem~\ref{thm:1InsideOutsideR}.
In the case of TSP tours~\cite{OdW07,WCL09} and red-blue matchings~\cite{BBH19}, the preprocessing to untangle $CC$-segments takes $\OO(n)$ flips. However, in the case of trees~\cite{BBH19} and in general (Theorem~\ref{thm:convexR}), the best bound known is $\OO(n \log n)$.
We first state a lemma used to prove Theorem~\ref{thm:1InsideOutsideR}.

\begin{lemma}
    \label{lem:farthestFirst}
    Consider a set $C$ of points in convex position, and a multiset $S$ of $n$ crossing-free segments with endpoints in $C$. Consider the multiset $S \cup \{s\}$ where $s$ is an extra segment with one endpoint in $C$ and one endpoint $q$ anywhere in the plane. 
    There exists an untangle sequence for $S \cup \{s\}$ of length $\OO(n)$ using only removal choice.
\end{lemma}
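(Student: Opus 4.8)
The plan is to exploit that $S$ is already crossing-free, so the only crossings of $S \cup \{s\}$ involve $s$. I would maintain throughout the flip sequence the invariant that exactly one segment (the \emph{active segment}) is incident to $q$ and joins $q$ to a vertex $p \in C$, while every other segment is a crossing-free $CC$-chord. This holds initially, and a flip preserves it: flipping the active segment $pq$ with a chord $cd$ it crosses removes $pq$ and $cd$ and inserts either $\{pc,qd\}$ or $\{pd,qc\}$, so one inserted segment is again incident to $q$ and the other is a $CC$-chord ($pc$ or $pd$).

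The strategy is \emph{farthest first}: among the chords crossing the active segment, flip the one whose crossing point is farthest from $q$; equivalently, walking from $p$ towards $q$, flip the first chord $cd$ met. Since the $CC$-chords are crossing-free they partition $\operatorname{conv}(C)$ into convex faces, and $cd$ is exactly the edge through which $pq$ leaves the face $F_0$ containing the vertex $p$. As $p,c,d$ are all vertices of the convex face $F_0$, both candidate $CC$-chords $pc$ and $pd$ lie inside $F_0$ and hence cross no other chord. Therefore, whichever reconnection the insertion selects, the new $CC$-chord keeps the $CC$-chords crossing-free; this is what lets the argument use removal choice only, and transfer to TSP tours, trees, and red--blue matchings. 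I would also record the easy fact that the new active segment ($qd$ or $qc$) does not cross the newly inserted chord, since they are opposite sides of the convex quadrilateral $pcqd$ whose diagonals are the flipped pair.

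To bound the number of flips I would charge each flip to the chord $cd$ it deletes and prove that no chord \emph{created} by a flip is ever deleted; this immediately gives at most $n$ flips, because a deleted chord is necessarily crossed by the active segment, so it must be one of the original crossing-free chords, and each such chord is deleted at most once. The key lemma is that the active segment never again crosses a created chord $\chi$ (say $\chi=pc$). First, since the active segment leaves the endpoint $p$ of $\chi$ without crossing $\chi$, the point $q$ lies strictly on one side of the line through $\chi$. Then I would show by induction on the subsequent flips that the $C$-endpoint of the active segment also stays strictly on $q$'s side of this line: the endpoint only ever moves to an endpoint of a chord crossed by the active segment, and such a chord---being crossing-free with $\chi$ and met on $q$'s side---lies entirely on $q$'s side of $\chi$ (inside $\operatorname{conv}(C)$ the chord $\chi$ coincides with its supporting line, so a chord disjoint from $\chi$ cannot straddle it). With both the $C$-endpoint and $q$ on the same side of the line, the active segment cannot cross $\chi$.

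The delicate point, and the one I expect to be the main obstacle, is that the raw number of crossings is \emph{not} monotone when $q$ lies outside $\operatorname{conv}(C)$ and sees $C$ non-convexly: a single flip can make the new active segment cross chords the old one avoided, so a naive ``one crossing removed per flip'' argument fails. The shielding argument above is what replaces it, and its subtlety lies entirely in arguing sidedness with respect to the supporting line of a created chord uniformly for $q$ inside or outside the hull, using that a chord meets its supporting line only within $\operatorname{conv}(C)$. Once this is established, the $\OO(n)$ bound follows.
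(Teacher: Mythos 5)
Your proof is correct and follows the paper's approach exactly: flip the chord crossed farthest from $q$, observe that the inserted $CC$-segment can never participate in another flip, and charge each flip to the original chord it deletes, giving at most $n$ flips. The only difference is presentational: your inductive sidedness argument showing the active segment never re-crosses a created chord is a hand-rolled version of the paper's one-line observation that the supporting \emph{line} of the inserted chord is crossing-free at the moment of insertion and hence, by Lemma~\ref{lem:lambda}, stays crossing-free forever.
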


\begin{proof}
    Iteratively flip the segment $qp_1$ with the segment $p_2p_3 \in S$ crossing $qp_1$ the farthest from $q$.
    This flip inserts a $CC$-segment $p_1p_2$, which is impossible to flip again, because the line $p_1p_2$ is crossing free. The flip does not create any crossing between $CC$-segments.
\end{proof}

We are now ready to state and prove the theorem.

\begin{theorem} \label{thm:1InsideOutsideR}
Consider a multigraph $(P,S)$ with $P = C \cup T$ where $C$ is in convex position, and $T = \{q\}$, and such that there is no crossing pair of $CC$-segments (possibly after a preprocessing for the convex case).
Let $n = |S|$ and $t$ be the number of $T$-segments.
There exists an untangle sequence of length $\OO(tn)$ using only removal choice.
\end{theorem}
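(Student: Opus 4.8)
The plan is to reduce the theorem to $t$ applications of Lemma~\ref{lem:farthestFirst}, handling the $T$-segments one at a time for $\OO(n)$ flips each, so that the total is $\OO(tn)$. First I would record the structure of the crossings after the preprocessing. Since $T=\{q\}$ and the two endpoints of a segment are distinct, every $T$-segment is a $Cq$-segment; hence any two $T$-segments meet at $q$ and so never cross. By hypothesis the $CC$-segments are pairwise crossing-free. Therefore every crossing is between exactly one $T$-segment and one $CC$-segment. Moreover, flipping a $T$-segment $qp_1$ against a $CC$-segment $p_2p_3$ reinserts either $qp_2,p_1p_3$ or $qp_3,p_1p_2$, i.e. always one $T$-segment and one $CC$-segment; thus the split of $S$ into $t$ $T$-segments and $n-t$ $CC$-segments is preserved by every flip, and the crossings that can ever appear stay of the $T$-vs-$CC$ type.

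Next I would process the $T$-segments one at a time. To handle a single $T$-segment $s$, I apply the farthest-first rule of Lemma~\ref{lem:farthestFirst}, repeatedly flipping $s$ with the $CC$-segment crossing it farthest from $q$. Because the other $T$-segments share the endpoint $q$, they never cross $s$ and hence never participate in these flips, so Lemma~\ref{lem:farthestFirst} applies verbatim with the current $n-t=\OO(n)$ crossing-free $CC$-segments playing the role of its crossing-free set: it makes $s$ crossing-free in $\OO(n)$ flips while keeping the $CC$-segments pairwise crossing-free. Running this once for each of the $t$ $T$-segments uses $\OO(tn)$ flips in total.

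The one gap to close, and the step I expect to be the main obstacle, is \emph{interference}: a $CC$-segment created while processing $s$ could cross a $T$-segment finalized earlier, forcing a reprocessing that might break the $\OO(tn)$ budget. To rule this out I would fix a good processing order, namely the angular order of the $C$-endpoints around $q$, and argue that each farthest-first flip moves the active endpoint monotonically and confines every newly created $CC$-segment to the angular wedge already swept, so it cannot cross the $T$-segments finalized before it; convexity of $C$ together with the farthest-from-$q$ choice is what should force this confinement, with the extremal case being a finalized $T$-segment that reaches a tangent of the hull and is therefore \emph{uncrossable} in the sense of Lemma~\ref{lem:split}, permanently freezing it. The case of $q$ inside the hull is analogous, with the angular sweep running over the full $2\pi$ around $q$. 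Once non-interference is established, each $T$-segment is touched during a single $\OO(n)$-flip phase, and summing over the $t$ phases yields the claimed $\OO(tn)$ bound.
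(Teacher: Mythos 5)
Your high-level plan coincides with the paper's: apply Lemma~\ref{lem:farthestFirst} once per $T$-segment, spending $\OO(n)$ flips on each, so that the whole difficulty is to show that a $T$-segment, once made crossing-free, never needs to be reprocessed. You correctly isolate this as the crux, but your proposed resolution of it has a genuine gap. You want to process the $T$-segments in angular order around $q$ and argue that ``each farthest-first flip moves the active endpoint monotonically,'' confining every newly created $CC$-segment to the wedge already swept. That monotonicity is not available to you: when $qp_1$ is flipped with the farthest-crossing $CC$-segment $p_2p_3$, the new $q$-endpoint is $p_2$ or $p_3$, and these lie on \emph{opposite} sides of the line $qp_1$; since only removal choice is permitted, you cannot pick which of the two insertions occurs, so the active endpoint can jump to either side and does not sweep monotonically. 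The wedge-confinement claim is never actually formulated or proved (``is what should force this confinement''), and the case of $q$ inside the hull --- where every direction around $q$ meets the hull and there is no natural ``already swept'' region --- is dismissed as analogous precisely where it is hardest.

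The paper closes this gap without any processing order, by showing that a finalized $T$-segment $s=qp$ can never regain a crossing: either the line $\ell$ through $s$ is crossing-free, so Lemma~\ref{lem:split} detaches $s$ and the two sides of $\ell$ from one another; or $q$ is outside the hull and ($s$ being crossing-free while $\ell$ is not) $s$ meets the hull of $C$ only at $p$, making $s$ \emph{uncrossable}; or $q$ is inside the hull, and an inserted segment crossing $s$ would force an endpoint of $s$ into the interior of the convex quadrilateral spanned by the two removed segments --- impossible for $p$ (a hull vertex) and impossible for $q$ because the invariant that no two $CC$-segments cross guarantees that one removed segment is incident to $q$, so $q$ is a vertex of that quadrilateral, not an interior point. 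Some argument of this kind is indispensable; as written, the central non-interference step of your proof rests on an unproved and, under removal-only choice, false premise.
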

\begin{proof}
For each segment $s$ with endpoint $q$ with crossing, we apply Lemma~\ref{lem:farthestFirst} to $s$ and the $CC$-segments crossing $s$.
Once a segment $s$ incident to $q$ is crossing free, it is impossible to flip it again as we fall in one of the following cases.
Let $\ell$ be the line containing $s$.

\textbf{Case~1:} If $\ell$ is crossing free, then it splits the multigraph in three partitions: the segments on one side of $\ell$, the segments on the other side of $\ell$, and the segment $s$ itself.

\textbf{Case~2:} If $\ell$ is not crossing free and $q$ is outside the convex hull of $C$, then $s$ is \emph{uncrossable}.

\textbf{Case~3:} If $q$ is inside the convex hull of $C$, then introducing a crossing on $s$ would require that $q$ lies in the interior of the convex quadrilateral whose diagonals are the two segments removed by a flip. The procedure excludes this possibility by ensuring that there are no crossing pair of $CC$-segments, and, therefore, that one of the removed segment already has $q$ as an endpoint.

Therefore, we need at most $n$ flips for each of the $t$ segments incident to $q$.
\end{proof}

\subsection{Two Points Outside a Convex}
\label{sec:2OutsideR}

In this section, we prove a theorem with a bound that is exponential in $t$, which makes it of little interest for large $t$. Notice, however, that in matchings $t \leq 2$, in a TSP tour $t \leq 4$, and in a binary tree $t \leq 6$. Also notice that the definition of $t$ is different from other theorems (here $TT$-segments are counted twice). Both definitions are equivalent up to a factor of $2$, but since $t$ appears in the exponent, they are not exchangeable.

\begin{theorem} \label{thm:2OutsideR}
Consider a multigraph $(P,S)$ with $P = C \cup T$ where $C$ is in convex position, the points of $T$ are outside the convex hull of $C$, and $|T| \leq 2$. 
Let $n = |S|$ and $t$ be the sum of the degrees of the points in $T$.
There exists an untangle sequence of length $\OO(2^t d_{\text{conv}}(n))$ using only removal choice, where $d_{\text{conv}}(n)$ is the number of flips to untangle any multiset of at most $n$ segments with endpoints in convex position.
\end{theorem}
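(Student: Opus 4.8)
The plan is to induct on $t$, the total degree of the points in $T$, aiming for a recurrence of the form $d(t,n) \le d(t_1,n_1)+d(t_2,n_2)+\OO(d_{\text{conv}}(n))$ with $t_1+t_2 \le t-1$ and $n_1+n_2 = \OO(n)$, which solves to $\OO(2^t d_{\text{conv}}(n))$ (using that $d_{\text{conv}}$ is monotone and at least linear, so each additive $d_{\text{conv}}(n_i)$ is $\OO(d_{\text{conv}}(n))$ and the per-level cost telescopes correctly). The base case is $|T| \le 1$: after untangling the $CC$-segments with the convex subroutine in $d_{\text{conv}}(n)$ flips, Theorem~\ref{thm:1InsideOutsideR} untangles the single outside point in $\OO(tn)$ flips, and $\OO(d_{\text{conv}}(n)+tn) = \OO(2^t d_{\text{conv}}(n))$. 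So the whole difficulty is the reduction step for $|T|=2$.

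For the reduction step I would first untangle all $CC$-segments ($d_{\text{conv}}(n)$ flips), so that afterwards no two $CC$-segments cross. Then, picking a point $q \in T$ of positive degree, I would take the segment incident to $q$ that is angularly extreme as seen from $q$ (closest to a supporting line of $P$ through $q$) and rotate it outward by repeatedly flipping it, with removal choice, against the crossing segment whose intersection is farthest from $q$, exactly as in Lemma~\ref{lem:farthestFirst}. Each such flip against a $CC$-segment keeps the segment incident to $q$, pushes its $C$-endpoint toward the tangent point, and, by the farthest-first rule, creates no new $CC$-crossing. After $\OO(n)$ such flips the segment reaches an extreme position where no segment crosses it; by Lemma~\ref{lem:split} it is then uncrossable and can never be flipped again, so it is removed from the active instance and the degree of $q$, hence $t$, drops by at least one.

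The supporting line $\ell$ of this locked segment is then used as a splitting line: it partitions $P \setminus \{q\}$ into the points on each side, and I would invoke Lemma~\ref{lem:split} to decompose what remains into at most two independent subinstances. Each subinstance inherits a subset of $C$ (still in convex position) and a subset of $T$ (hence still at most two outside points), and the total $T$-degree over the two subinstances is at most $t-1$; this is what feeds the recurrence and produces the factor $2^t$, since one locked segment may split the remainder into two pieces, doubling the count once per unit of degree removed. Recursing on each subinstance and charging $\OO(d_{\text{conv}}(n))$ to the current level gives the claimed bound.

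The hard part will be controlling the interaction between the two outside points, which is exactly what separates this case from the single-point Theorem~\ref{thm:1InsideOutsideR}. When the extreme segment incident to $q_1$ is crossed by a segment incident to $q_2$, flipping them can create a $q_1q_2$-segment and can, in principle, re-tangle previously settled segments; I would need to argue that by always rotating the globally most extreme $T$-incident segment and using the farthest-first removal rule, the process still terminates with a genuinely uncrossable $T$-segment whose line separates the instance into pieces of strictly smaller total $T$-degree, while the number of flips spent before the next split stays $\OO(d_{\text{conv}}(n))$. Making the claim "the extreme segment becomes uncrossable and its line splits the rest into at most two subinstances" fully precise, including the boundary case where the locked segment is a $TT$-segment and $t$ drops by two, is the crux of the argument.
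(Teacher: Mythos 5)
Your high-level skeleton (induction on $t$, farthest-first flips from Lemma~\ref{lem:farthestFirst}, finding a crossing-free line to split via Lemma~\ref{lem:split}) matches the paper, but the argument has a genuine gap that you yourself flag as ``the crux'': you never establish that the extreme $T$-incident segment actually becomes uncrossable and yields a splitting line after $\OO(d_{\text{conv}}(n))$ flips when the two outside points interact. This is not a detail to be filled in later; it is where the difficulty lives. Flipping the segment incident to $q_1$ against a segment incident to $q_2$ can produce a $q_1q_2$-segment or a new $CC$-segment with crossings, re-tangling parts of the instance you considered settled, and your ``globally most extreme segment'' rule gives no potential argument bounding how many times this happens before a split occurs. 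Moreover, ``crossing-free'' is not the same as ``uncrossable'': a $CT$-segment that is currently uncrossed but still meets the interior of the convex hull of the remaining endpoints (a $CTI$-segment in the paper's terminology) can be crossed again by later flips, so Lemma~\ref{lem:split} does not lock it in place as you claim.

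The paper resolves exactly this difficulty with a structural device absent from your proposal: in Phase~1 it \emph{removes} one $T$-segment $s$, recursively untangles the remaining multiset (paying $f(t-1)$ flips), and only then reinserts $s$, so that \emph{all} crossings lie on the single segment $s$. This reduces the interaction problem to a finite case analysis on $s$ alone (is it a $TT$- or $CT$-segment, is it a $TTI$/$TTO$/$CTI$/$CTO$-segment, does a $T$-splitter already exist), each case producing a splitter after at most one flip or after the $\OO(n)$ farthest-first flips of Lemma~\ref{lem:farthestFirst}. The resulting recurrence is $f(t) \le f(t-1) + \OO(n) + f(t_1) + f(t_2)$ with $t_1+t_2 \le t$ --- note the extra $f(t-1)$ term, which is the price of the remove-recurse-reinsert step and is why the paper's recurrence, unlike yours, does not need the subproblem degrees to sum to $t-1$. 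Without this device, or some substitute potential controlling the $q_1$--$q_2$ interaction, your induction step does not go through.
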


\begin{proof}
Throughout this proof, we partition the $TT$-segments respectively the $CT$-segments into two types: \emph{$TTI$-segment} and \emph{$CTI$-segment} if it intersects the interior of the convex hull of $C$ and \emph{$TTO$-segment} and \emph{$CTO$-segment} otherwise.
Let $f(t)$ be the number of flips to untangle a multiset $S$ as in the statement of the theorem. The proof proceeds by induction. The base case is $t = 0$, when $f(0) \leq d_{\text{conv}}(n)$ by definition of $d_{\text{conv}}(n)$.

Next, we show how to bound $f(t)$ for $t > 0$, but first we need some definitions. A line $\ell$ is a \emph{$T$-splitter} if $\ell$ is crossing free and either $\ell$ contains a $T$-segment or there are $T$-segments on both sides of $\ell$. We abusively say that a segment $s$ is a $T$-splitter if the line containing $s$ is a $T$-splitter. A $T$-splitter is useful because we can apply Lemma~\ref{lem:split} and solve sub-problems with a lower value of $t$ by induction. 

\para{Phase~1: untangle all but one segment by induction.} We remove an arbitrary $CT$-segment or $TT$-segment $s$ from $S$. We then use induction to untangle $S$ using $f(t-1)$ flips and insert the segment $s$ back in $S$ afterwards. Notice that all crossings are now on $s$.

\para{Phase~2.1: apply induction if possible.}
If $S$ admits a $T$-splitter $\ell$, then we apply Lemma~\ref{lem:split} to solve each side of $\ell$ independently using induction.

If $S$ has a crossing-free $TTO$-segment $qq'$ such that the line $qq'$ is not crossing free, then $qq'$ is uncrossable, and we remove $qq'$ from $S$ and untangle $S$ by induction.
Similarly, in the case where $T=\{q,q'\}$ and where $qq'$ is a $TTI$-segment, if $S$ has a $CTO$-segment, say $pq$, then $pq$ is uncrossable, and we remove $pq$ from $S$ and untangle $S$ by induction.

In all the three cases of Phase~2.1 we get $f(t) \leq f(t-1) + f(t_1) + f(t_2)$, where $t_1+t_2 \leq t$ and $t_1,t_2 \geq 1$. 

\para{Phase~2.2: split after one flip.}
If $S$ contains no $T$-splitter and if $s$ is a $TT$-segment, then there remains no $CT$-segment in $S$ (as every $CT$-segment shares an endpoint with the $TT$-segment $s$ that contains all crossings), and $s$ crosses a $CC$-segment $s'$. 
A crossing-free $CT$-segment would either be a $CTI$-segment, hence a $T$-splitter, or a $CTO$-segment and, hence uncrossable and removed by one of the induction cases of Phase~2.1.

The segment $s'$ becomes a $T$-splitter after flipping $s$ with $s'$, and we invoke induction. 
By Lemma~\ref{lem:split}, we get in this case $f(t) \leq f(t-1) + 1 + f(t_1) + f(t_2)$, where $t_1+t_2 \leq t$ and $t_1,t_2 \geq 1$.

\para{Phase~2.3: split after $\OO(n)$ flips.}
In this case, $S$ contains no $T$-splitter and $s$ is a $CT$-segment, say with $q$ as its endpoint in $T$.
While $s'$, the segment of $S$ that crosses $s$ the farthest away from $q$, is a $CC$-segment, we flip $s$ and $s'$ and we set $s$ to be the newly inserted $CT$-segment incident to $q$. By Lemma~\ref{lem:farthestFirst}, at most $n$ flips are performed in this loop.

At the end of the loop, either $s$ is crossing free, or $s'$ is a $CT$-segment, say with $q'$ as its endpoint in $T$.
Then, we also flip $s$ and $s'$. 

\textbf{Insertion case~1:}
If two $CT$-segments are inserted, then, either one of them is uncrossable (this is the case if $s'$ is a $CTO$-segment), or $s'$ is now a $T$-splitter (recall that if 
$qq'$ is a $TTI$-segment, then all the $CTO$-segments have been removed at Phase~2.1.).

\textbf{Insertion case~2:}
If the $TT$-segment $qq'$ is inserted, then the inserted $CC$-segment is crossing free (as in the proof of Lemma~\ref{lem:farthestFirst}), and, if $qq'$ is not already crossing free, we flip $qq'$ with any segment, say $pp'$.

Next, we split $S$ as follows.
Among the $CTI$-segments of $S$ which are on the upper (respectively lower) side of the line $qq'$, consider the one whose endpoint $p_{\text{upper}}$ (respectively $p_{\text{lower}}$) in $C$ is the closest to the line $qq'$.
The segments of $S$ are either inside or outside the convex quadrilateral $qp_{\text{lower}}q'p_{\text{upper}}$, and we know that only the segments inside may have crossings.
By Lemma~\ref{lem:split}, we remove from $S$ all the segments outside $qp_{\text{lower}}q'p_{\text{upper}}$.
Recall that, in our case, $qq'$ is a $TTI$-segment, and all the $CTO$-segments have been removed at Phase~2.1.
The line $pp'$ is finally a $T$-splitter.
Again, by Lemma~\ref{lem:split}, we get in this case $f(t) \leq f(t-1) + n+2 + f(t_1) + f(t_2)$, where $t_1+t_2 \leq t$ and $t_1,t_2 \geq 1$.

\para{}
The last bound on $f(t)$ dominates the recurrence. Using that $f(t_1) + f(t_2) \leq f(t-1) + f(1)$ and $t<n$ we get
\[f(t) \leq f(t-1) + n+2 + f(t_1) + f(t_2) \leq \OO(n) + 2 f(t-1),\]
which solves to $f(t) = \OO(2^t d_{\text{conv}}(n))$ as claimed.
\end{proof}

\subsection{Two Points inside a Convex}
\label{sec:2insideR}

We prove a similar theorem for two points inside the convex hull of $C$.

\begin{theorem} \label{thm:2InsideR}
Consider a multigraph $(P,S)$ with $P = C \cup T$ where $C$ is in convex position, the points of $T$ are inside the convex hull of $C$, and $T = \{q,q'\}$. 
Let $n = |S|$ and $t$ be the number of $T$-segments.
There exists an untangle sequence of length $\OO(d_{\text{conv}}(n) + tn)$ using only removal choice, where $d_{\text{conv}}(n)$ is the number of flips to untangle any multiset of at most $n$ segments with endpoints in convex position.
\end{theorem}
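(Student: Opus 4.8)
The plan is to split the argument into a convex preprocessing phase charged to $d_{\text{conv}}(n)$ and a phase that removes the crossings incident to the two interior points $q,q'$, charged to $\OO(tn)$ through the number of crossings. First I would untangle all $CC$-segments while ignoring the $T$-segments: every such flip uses four points of $C$ only, so this is exactly a convex untangling of a submultiset of at most $n$ segments and costs $d_{\text{conv}}(n)$ flips (Theorem~\ref{thm:convexR}), after which no two $CC$-segments cross. Two structural observations then drive the rest. First, a $qq'$-segment can only be crossed by a $CC$-segment, because every segment incident to $q$ or $q'$ shares an endpoint with it; moreover flipping a $qq'$-segment with a $CC$-segment inserts two $CT$-segments and no new $CC$-segment. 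Second, by Lemma~\ref{lem:farthestFirst}, clearing the crossings of a single $T$-segment against the currently crossing-free $CC$-segments with the farthest-first rule inserts $CC$-segments on crossing-free lines, hence creates no new crossing pair of $CC$-segments and uses $\OO(n)$ flips.

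With these in hand, I would use the total number of crossings $X$ as the master potential. Once the $CC$-segments are untangled, $X$ counts only crossings incident to a $T$-segment, of which there are $\OO(tn)$ since each of the $\OO(t)$ $T$-segments meets at most $n-1$ others. I would then repeatedly clear $T$-segments: as long as some $T$-segment crosses a $CC$-segment, farthest-first removes all such crossings in $\OO(n)$ flips without ever reintroducing a $CC$-$CC$ crossing; what remains are crossings between a $q$-segment and a $q'$-segment and crossings of $qq'$-segments with $CC$-segments, the latter disappearing by the first observation and by Lemma~\ref{lem:split}. As in Theorem~\ref{thm:1InsideOutsideR}, the point of these observations is to prevent the two interior points from forcing unbounded rework; the goal is to show that every remaining flip can be chosen (using only removal choice) so that $X$ drops, giving $\OO(tn)$ flips on top of the preprocessing.

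The hard part is the crossings between the two stars under only removal choice. Flipping a crossing pair $qx,q'y$ forces one of two insertions I cannot control: the \emph{merge} $\{xy,qq'\}$, whose new $CC$-segment $xy$ may cross other $CC$-segments, or the \emph{swap} $\{xq',yq\}$, which preserves two $CT$-segments but need not by itself lower the inter-star crossing count. I expect the main obstacle to be proving that a removal choice always exists for which $X$ strictly decreases no matter which of the two insertions occurs, while simultaneously controlling the interference between $q$ and $q'$ (a flip clearing a segment at $q$ must not durably recreate crossings at $q'$) and amortizing the $CC$-$CC$ crossings spawned by merges. For the amortization I would use that every $CC$-$CC$ flip strictly decreases the number of crossings among $CC$-segments (the classical convex-position property behind the $\binom{n}{2}$ bound) and that, by the second observation, farthest-first never increases this number, so the only source of new $CC$-$CC$ crossings is merges and their cost can be absorbed into the decrease of $X$. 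Once this choice-robust, amortized decrease of $X$ is established, $X=\OO(tn)$ bounds all post-preprocessing flips and the claimed $\OO(d_{\text{conv}}(n)+tn)$ follows.
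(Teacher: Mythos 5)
Your skeleton --- a convex preprocessing phase charged to $d_{\text{conv}}(n)$ followed by an $\OO(tn)$ phase driven by a crossing count --- matches the shape of the paper's argument, but the proposal stops exactly where the real work begins. You state that the remaining task is to prove ``that a removal choice always exists for which $X$ strictly decreases no matter which of the two insertions occurs,'' and you derive the theorem only ``once this choice-robust, amortized decrease of $X$ is established.'' That unproved claim is the entire content of the theorem, and in the form you hope for it is false: the total crossing number does increase on some flips, for reasons you partially name but do not defuse. Concretely, (i) flipping $qx$ with $q'y$ may adversarially insert the pair $\{xy,qq'\}$, and the new $CC$-segment $xy$ can cross many previously crossing-free segments, while the new copy of $qq'$ can cross many central $CC$-segments; and (ii) the farthest-first rule of Lemma~\ref{lem:farthestFirst} only guarantees an uncrossable inserted $CC$-segment when \emph{all} crossings lie on the single segment being cleared --- in your setting $q'$ can lie inside the convex quadrilateral of such a flip, so an inserted segment incident to $q$ can acquire new crossings with the star at $q'$, and $X$ need not decrease even on these ``safe'' flips.

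The paper's proof is essentially a machine built to control exactly these increases, and none of its bookkeeping is optional. It orders the phases differently from you ($CT\times CT$ crossings are cleared first via two convex subproblems on either side of the line $qq'$, and only then $CC\times CC$), classifies the $CC$-segments as central, peripheral, or outermost with respect to $qq'$, always flips the non-central $CC$-segment of minimum \emph{out-depth} with the $CT$-segment crossing it closest to a chosen endpoint, introduces the \emph{ear} structure and a priority order for the central phase, and maintains invariants under which the relevant crossing count $\chi$ decreases at all but $\OO(t)$ steps and increases by $\OO(t^2)$ in total; the $\OO(n)$ executions of a cleanup loop are charged to newly uncrossable segments via Lemma~\ref{lem:split}. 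Your proposal correctly budgets $\OO(tn)$ but supplies no mechanism for the amortization, so as written it is a plan rather than a proof.
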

\begin{proof}
The untangle sequence is decomposed in five phases. At the end of each phase, a new type of crossings is removed, and types of crossings removed in the previous phases are not present, even if they may temporarily appear during the phase.

\textbf{Phase~1 ($\mathbf{CT\times CT}$).} In this phase, we remove all crossings between pairs of $CT$-segments using $\OO(d_{\text{conv}}(t)) = \OO(d_{\text{conv}}(n))$ flips. We separately solve two convex sub-problems defined by the $CT$-segments, one on each side of the line $qq'$.

\textbf{Phase~2 ($\mathbf{CC\times CC}$).} In this phase, we remove all crossings between pairs of $CC$-segments using $\OO(d_{\text{conv}}(n))$ flips. As no $CT$-segment has been created, there is still no crossing between a pair of $CT$ segments. Throughout, our removal will preserve the invariant that no pair of $CC$-segments crosses.

\textbf{Phase~3 ($\mathbf{CT \times \textbf{non-central } CC}$).} 
We distinguish between a few types of $CC$-segments. The \textit{central} $CC$-segments cross the segment $qq'$ (regardless of $qq'$ being in $S$ or not), while the \emph{non-central} do not. The \textit{peripheral} $CC$-segments cross the line $qq'$ but not the segment $qq'$, while the \emph{outermost} $CC$-segments do not cross either. In this phase, we remove all crossings between $CT$-segments and non-central $CC$-segments.

Given a non-central $CC$-segment $pp'$, let the \emph{out-depth} $\delta'(pp')$ be the number of points of $C$ that are contained inside the halfplane bounded by the line $pp'$ and not containing $T$. Also, let $\chi$ be the number of crossings between the non-central $CC$-segments and the $CT$-segments. At the end of each step the two following invariants are preserved. (i) No pair of $CC$-segments crosses. (ii) No pair of $CT$-segments crosses.

At each step, we choose to flip the non-central $CC$-segment $pp'$ of minimum out-depth that crosses a $CT$-segment. We flip $pp'$ with the $CT$-segment $q''p''$ (with $q'' \in \{q,q'\}$)
that crosses $pp'$ at the point closest to $p$ (Figure~\ref{fig:2InsideR-3-1}(a) and Figure~\ref{fig:2InsideR-3-2}(a)).
One of the possibly inserted pairs may contain a $CT$-segment $s$ that crosses another $CT$-segment $s'$, violating the invariant (ii) (Figure~\ref{fig:2InsideR-3-1}(b) and Figure~\ref{fig:2InsideR-3-2}(b)). If there are multiple such segments $s'$, then we consider $s'$ to be the segment whose crossing with $s$ is closer to $q''$. We flip $s$ and $s'$ and obtain either two $CT$-segments (Figure~\ref{fig:2InsideR-3-1}(c) and Figure~\ref{fig:2InsideR-3-2}(c)) or a $CC$-segment and the segment $qq'$ (Figure~\ref{fig:2InsideR-3-1}(d) and Figure~\ref{fig:2InsideR-3-2}(d)). The analysis is divided in two main cases.

\begin{figure}[htb]
 \centering
 \phantom{.}\hfill\includegraphics[scale=\graphicsScale,page=2]{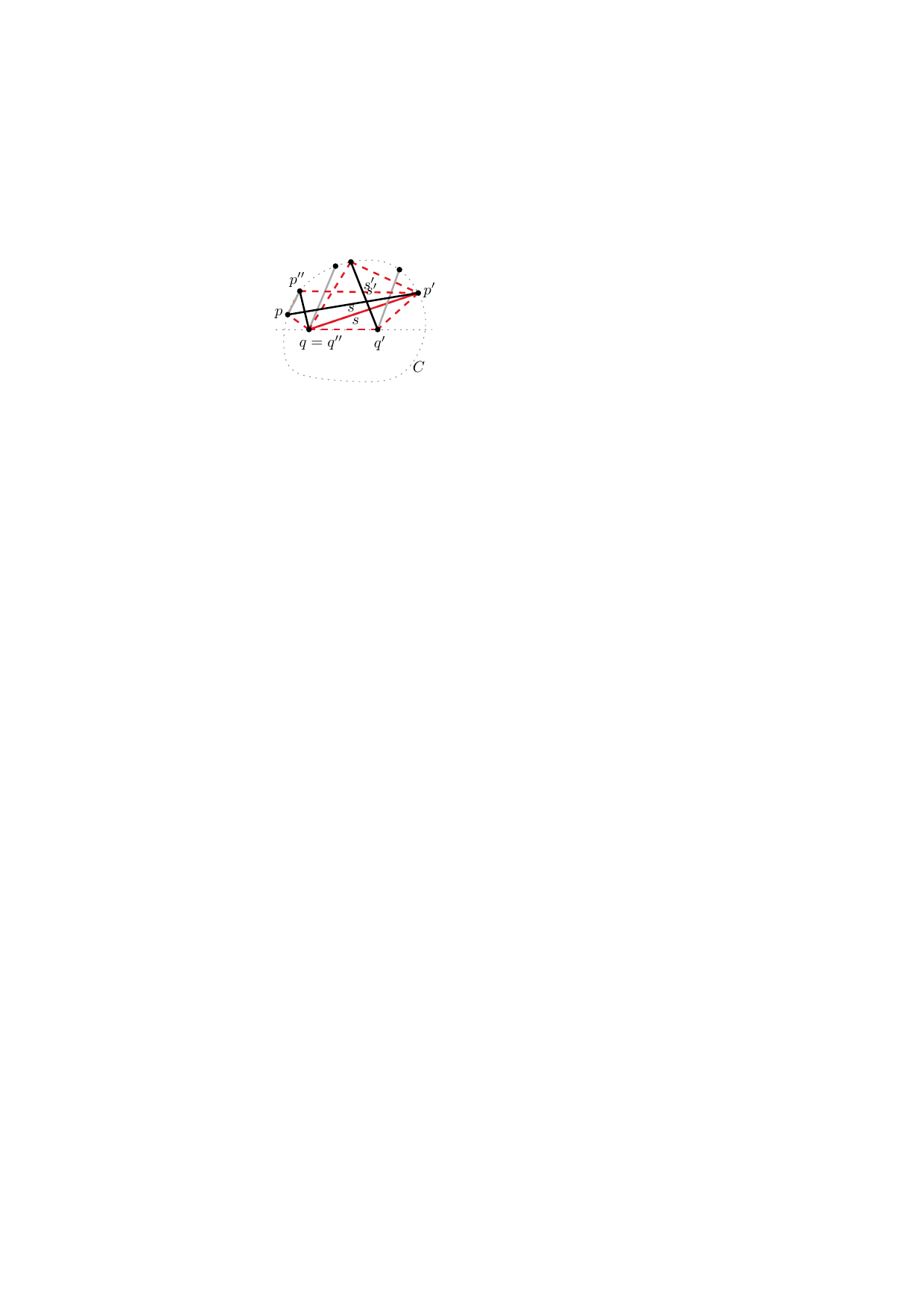}\hfill\hfill
 \includegraphics[scale=\graphicsScale,page=3]{2InsideR}\hfill\hfill
 \includegraphics[scale=\graphicsScale,page=5]{2InsideR}\hfill\hfill
 \includegraphics[scale=\graphicsScale,page=4]{2InsideR}\hfill\phantom{.}\\
 \phantom{.}\hfill (a) \hfill\hfill (b) \hfill\hfill (c) \hfill\hfill (d) \hfill\phantom{.}
 \caption{Theorem~\ref{thm:2InsideR}, Phase~3 when $pp'$ is an outermost segment.}
 \label{fig:2InsideR-3-1}
\end{figure}

If $pp'$ is an outermost $CC$-segment (see Figure~\ref{fig:2InsideR-3-1}), then case analysis shows that the two invariants are preserved and $\chi$ decreases. 

\begin{figure}[htb]
 \centering
 \phantom{.}\hfill\includegraphics[scale=\graphicsScale,page=2]{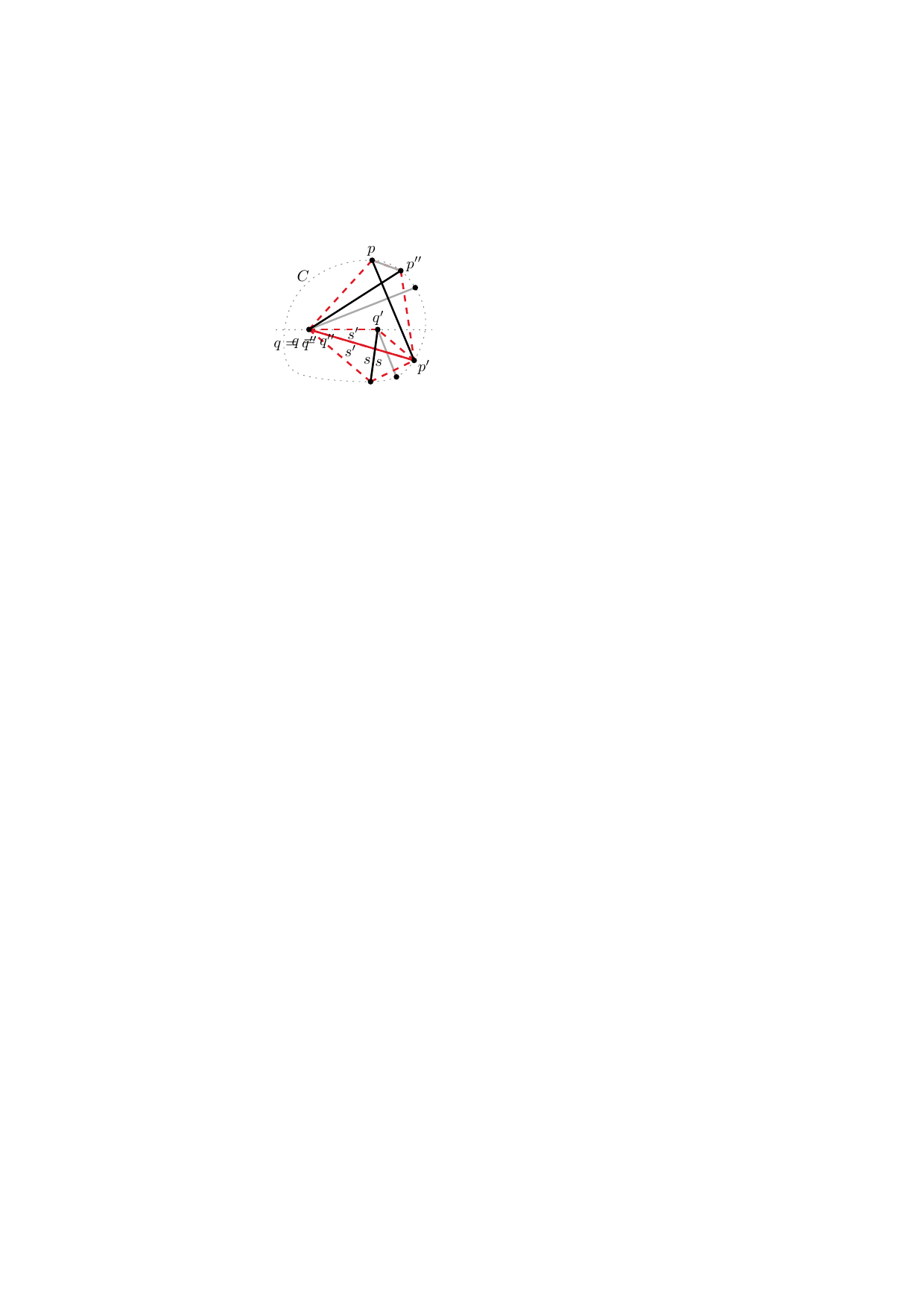}\hfill\hfill
 \includegraphics[scale=\graphicsScale,page=3]{2InsideRcase2}\hfill\hfill
 \includegraphics[scale=\graphicsScale,page=5]{2InsideRcase2}\hfill\hfill
 \includegraphics[scale=\graphicsScale,page=4]{2InsideRcase2}\hfill\phantom{.}\\
 \phantom{.}\hfill (a) \hfill\hfill (b) \hfill\hfill (c) \hfill\hfill (d) \hfill\phantom{.}
 \caption{Theorem~\ref{thm:2InsideR}, Phase~3 when $pp'$ is a peripheral segment.}
 \label{fig:2InsideR-3-2}
\end{figure}

If $pp'$ is a peripheral $CC$-segment (see Figure~\ref{fig:2InsideR-3-2}), then a case analysis shows that the two invariants are preserved and $\chi$ has the following behavior. If no $CC$-segment is inserted, then $\chi$ decreases. Otherwise a $CC$-segment and a $TT$-segment are inserted and $\chi$ may increase by $\OO(t)$ (Figure~\ref{fig:2InsideR-3-2}(d)). Notice that the number of times the $TT$-segment $qq'$ is inserted is $\OO(t)$, which bounds the total increase by $\OO(t^2)$. 

As $\chi = \OO(tn)$, the total increase is $\OO(t^2)$, and $\chi$ decreases at all but $\OO(t)$ steps, we have that the number of flips in Phase~3 is $\OO(tn)$.

\textbf{Phase~4 ($\mathbf{CT \times \textbf{central } CC}$).} 
At this point, each crossing involves a central $CC$-segment and either a $CT$-segment or the $TT$-segment $qq'$.
In this phase, we remove all crossings between $CT$-segments and central $CC$-segments, ignoring the $TT$-segments.
This phase ends with crossings only between $qq'$ and central $CC$-segments.

\begin{figure}[htb]
 \centering
 \phantom{.}\hfill\includegraphics[scale=\graphicsScale,page=1]{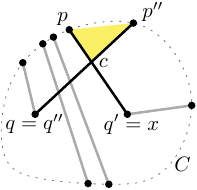}\hfill\hfill
 \includegraphics[scale=\graphicsScale,page=2]{2InsideRphase4}\hfill\hfill
 \includegraphics[scale=\graphicsScale,page=4]{2InsideRphase4}\hfill\hfill
 \includegraphics[scale=\graphicsScale,page=3]{2InsideRphase4}\hfill\phantom{.}\\
 \phantom{.}\hfill (a) \hfill\hfill (b) \hfill\hfill (c) \hfill\hfill (d) \hfill\phantom{.}
 \caption{Theorem~\ref{thm:2InsideR}, Phase~4. (a) A pair of $CT$-segments with an ear. (b) A $CC$-segment and a $CT$-segment with an ear. (c) Flipping an ear that produces crossing pairs of $CT$-segments. (d) Flipping an ear that inserts a non-central $CC$-segment with crossings.}
 \label{fig:2InsideR-4}
\end{figure}

Given four endpoints $q'' \in T$, $p,p'' \in C$, and $x \in C \cup T$, we say that a pair of segments $p''q'',xp \in S$ crossing at a point $c$ contains an \emph{ear} $\widehat{pp''}$ if the interior of the triangle $pp''c$ intersects no segment of $S$ (see Figure~\ref{fig:2InsideR-4}(a) and~\ref{fig:2InsideR-4}(b)).
Every set of segments with endpoints in $C \cup T$ with $|T| = 2$ that has crossings (not involving the $TT$-segment) contains an ear (adjacent to the crossing that is farthest from the line $qq'$).

At each \emph{step}, we flip a pair of segments $p''q'',xp$ that contains an ear $\widehat{pp''}$, prioritizing pairs where both segments are $CT$-segments. Notice that, even though initially we did not have crossing pairs of $CT$-segments, they may be produced in the flip (Figure~\ref{fig:2InsideR-4}(c)).
If the flip inserts a non-central $CC$-segment which crosses some $CT$-segments (Figure~\ref{fig:2InsideR-4}(d)), then, we perform the following \emph{while loop}. Assume without loss of generality that $qq'$ is horizontal and $s$ is closer to $q'$ than to $q$. While there exists a non-central $CC$-segment $s$ with crossings, we flip $s$ with the $CT$-segment $s'$ crossing $s$ that comes first according to the following order. As a first criterion, a segment incident to $q$ comes before a segment incident to $q'$. As a second tie-breaking criterion, a segment whose crossing point with $s$ that is farther from the line $qq'$ comes before one that is closer.

Let $\chi = \OO(tn)$ be the number of crossings between central $CC$-segments and $CT$-segments plus the number of crossings between $CT$-segments.
A case analysis shows that the value of $\chi$ decreases at each step. If no non-central $CC$-segment is inserted, then the corresponding step consists of a single flip. As $\chi$ decreases, there are $\OO(tn)$ steps that do not insert a non-central $CC$-segment.

However, if a non-central $CC$-segment is inserted, at the end of the step we inserted a $CC$-segment that can no longer be flipped (Lemma~\ref{lem:split}). As the number of $CC$-segments is $\OO(n)$, we have that the number of times the while loop is executed is $\OO(n)$. Since each execution of the while loop performs $\OO(t)$ flips, we have a total of $\OO(tn)$ flips in this phase.

\textbf{Phase~5 ($\mathbf{TT \times \textbf{central } CC}$).} In this phase, we remove all crossings left, which are between the possibly multiple copies of the $TT$-segment $qq'$ and central $CC$-segments. The endpoints of the segments with crossings are in convex position and all other endpoints are outside their convex hull. Hence, by Lemma~\ref{lem:split}, it is possible to obtain a crossing-free multigraph using $\OO(d_{\text{conv}}(n))$ flips.
\end{proof}

\subsection{One Point inside and One Point Outside a Convex}
\label{sec:1Inside1OutsideR}

Given an endpoint $p$, let $\delta(p)$ denote the degree of $p$, that is, the number of segments incident to $p$. The following lemma is used to prove Theorem~\ref{thm:1Inside1OutsideR}.

\begin{lemma} \label{lem:1Inside1OutsideR}
Consider a multigraph $(P,S)$ with $P = C \cup T$ where $C$ is in convex position, and $T = \{q,q'\}$ such that $q$ is outside the convex hull of $C$ and $q'$ is inside the convex hull of $C$. Consider that $q$ is the endpoint of a single segment $s$ and all crossings are on $s$. 
Let $n = |S|$ and $t = \OO(\delta(q'))$ be the number of $T$-segments.
There exists a flip sequence of length $\OO(tn)$ using only removal choice that ends with all crossings (if any) on the segment $qq'$.

\end{lemma}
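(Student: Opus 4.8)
The plan is to push all crossings onto $qq'$ by repeatedly flipping the unique segment incident to $q$ against the crossing segment farthest from $q$, in the spirit of Lemma~\ref{lem:farthestFirst}. First I would dispose of the trivial case: if the segment incident to $q$ is already $qq'$, then all crossings lie on $qq'$ and there is nothing to do. Otherwise write the segment incident to $q$ as $s=qa$ with $a\in C$. Since flips preserve degrees, $q$ keeps degree one throughout, so at every moment there is a \emph{unique} segment $s$ incident to $q$, and I would maintain the invariant that every crossing lies on $s$, i.e. that $S\setminus\{s\}$ is crossing-free; this holds initially by hypothesis. I would also record the simple fact that the degree $\delta(q')$, hence the number of $q'$-segments (which is $\OO(t)$), is preserved by every flip, since a flip touching $q'$ removes one $q'$-segment and inserts exactly one.

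The driving step is: let $s'$ be the segment crossing $s$ farthest from $q$ and flip $s$ with $s'$, taking the insertion whose segment not incident to $q$ is the short chord lying \emph{beyond} the crossing point (exactly as in the removal-choice flips of Lemma~\ref{lem:farthestFirst} and of Phase~2.3 of Theorem~\ref{thm:2OutsideR}). The analysis splits on the type of $s'$. If $s'$ is a $CC$-segment, the flip behaves as in Lemma~\ref{lem:farthestFirst}: it inserts a crossing-free $CC$-chord, which is uncrossable and therefore frozen by Lemma~\ref{lem:split}, together with a new segment $qa'$ inheriting precisely the crossings of $s$ other than the one with $s'$. If instead $s'$ is a $q'$-segment $q'c$, then the same ``farthest from $q$'' argument makes the chord $ac$ crossing-free while the second inserted segment is exactly $qq'$; at that point every remaining crossing sits on $qq'$ and the process stops, which is the desired conclusion.

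The hard part is the interior point $q'$, which is absent from the setting of Lemma~\ref{lem:farthestFirst}. When $q'$ happens to lie inside the thin triangle bounded by $s$, by $s'$, and by the inserted chord, a $q'$-segment can cross that chord: this is the only mechanism by which a crossing can be created off the current $q$-segment, and it is exactly what destroys the clean monotonicity that would otherwise yield an $\OO(n)$ bound. I would therefore verify geometrically that outside this configuration the chord is genuinely crossing-free (no hull vertex and no $q'$-segment enters the triangle, since its two remaining sides lie on $s$ beyond the farthest crossing and on $s'$, both of which are crossing-free by hypothesis and by the farthest choice), so the invariant is preserved in all the ``good'' cases, and to isolate the few ``bad'' configurations where $q'$ sits inside the triangle.

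For the count I would charge the bad events to the $\OO(t)$ $q'$-segments. Each uninterrupted run of farthest-first flips strictly reduces the number of crossings on the $q$-segment while permanently freezing a $CC$-chord (Lemma~\ref{lem:split}), so such a run has length $\OO(n)$; a bad event, tied to a specific $q'$-segment, can inject crossings back onto a chord and force a new run, but this happens $\OO(t)$ times since $\delta(q')$ is invariant. Summing $\OO(n)$ flips over $\OO(t)$ runs gives the claimed $\OO(tn)$ bound, and the procedure terminates either with $s$ crossing-free—so no crossing remains and the statement holds vacuously—or with $s=qq'$ carrying all crossings, as required. The delicate points, where the geometry of ``farthest from $q$'' must be used carefully, are proving that the non-$q$ chord is crossing-free in every good case and that each bad configuration is correctly attributed to a single $q'$-segment so that the $\OO(t)$ accounting is valid.
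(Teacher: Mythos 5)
Your farthest-first strategy is the same one the paper uses, but there are two genuine gaps. First, the lemma allows only removal choice, yet you repeatedly make insertion choices (``taking the insertion whose segment not incident to $q$ is the short chord''). This is harmless when $s'$ is a $CC$-segment, since both possible insertions then have that form, but when $s'$ is a $q'$-segment $q'c$ and $s=qa$ the flip may just as well insert the two $CT$-segments $qc$ and $q'a$ rather than the pair $qq',ac$ that you rely on to terminate. This unhandled outcome is exactly the paper's Case~3: there one argues that the inserted segment $q'a$ is crossing free, so this outcome occurs only $\OO(t)$ times before a $CC$-flip or the insertion of $qq'$ happens, and it is precisely the source of the $\OO(t)$ overhead per level of the recursion.

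Second, your treatment of the ``bad event'' ($q'$ lying inside the triangle bounded by $s$, $s'$, and the inserted chord) does not actually resolve it. After such a flip the offending crossings sit between the inserted $CC$-chord and some $q'$-segments; they do not involve the $q$-segment at all, so ``forcing a new run'' of farthest-first flips on the $q$-segment can never clear them, and the claim that each $q'$-segment accounts for $\OO(1)$ bad events is unsubstantiated. The paper's resolution is the step you are missing: after a flip with a $CC$-segment $s'$, the line $\ell$ through $s'$ becomes crossing free, and the triangle in question lies entirely on the side of $\ell$ opposite $q$; hence the bad event can only occur when $\ell$ separates $q$ from $q'$, in which case Lemma~\ref{lem:split} splits the instance into two sub-instances each containing a single special point, both finished by Theorem~\ref{thm:1InsideOutsideR} in $\OO(tn)$ flips. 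When $\ell$ does not separate $q$ from $q'$, the far side is already crossing free and one recurses on $n'\le n-1$ segments, yielding $f(n)\le \OO(t)+f(n-1)=\OO(tn)$. Without the separation observation and the appeal to Theorem~\ref{thm:1InsideOutsideR}, your accounting does not go through.
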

\begin{proof}
We proceed as follows, while $s$ has crossings. For induction purpose, let $f(n')$ be the length of the flip sequence in the lemma statement for $n' < n$ segments.

Let $s'$ be the segment that crosses $s$ at the point farthest from $q$. We flip $s$ and $s'$, arriving at one of the three cases below (Figure~\ref{fig:1Inside1OutsideR}).

\begin{figure}[!ht]
  \centering\hfill
  \includegraphics[page=1,scale=\graphicsScale]{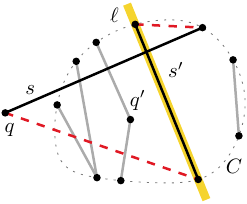}\hfill\hfill
  \includegraphics[page=2,scale=\graphicsScale]{1Inside1OutsideR}\hfill\hfill
  \includegraphics[page=3,scale=\graphicsScale]{1Inside1OutsideR}\hfill~\\
  \hfill Case~1\hfill\hfill Case~2\hfill\hfill Case~3\hfill~
    \caption{The three cases in the proof of Lemma~\ref{lem:1Inside1OutsideR}.}
    \label{fig:1Inside1OutsideR}
\end{figure}

\para{Case~1 ($\mathbf{CT \times CC)}$.} In this case, the segment $s'$ is a $CC$-segment. Notice that the line $\ell$ containing $s'$ becomes crossing free after the flip. There are segments on both sides of $\ell$. 
If $\ell$ separates $q,q'$, then we untangle both sides independently (Lemma~\ref{lem:split}) using $\OO(n)$ and $\OO(t n)$ flips (Theorem~\ref{thm:1InsideOutsideR}). Otherwise, the segments on one side of $\ell$ are already crossing free (because of the specific choice of $s'$) and we inductively untangle the $n' \leq n-1$ segments on the other side of $\ell$ using $f(n')$ flips.

\para{Case~2 ($\mathbf{CT \times CT \rightarrow CC,TT}$).} If $s'$ is a $CT$-segment and one of the inserted segments is the $TT$-segment $qq'$, then the procedure is over as all crossings are on $qq'$.

\para{Case~3 ($\mathbf{CT \times CT \rightarrow CT,CT}$).} In this case two $CT$-segments are inserted. Let $p \in C$ be an endpoint of $s = qp$. Since the inserted $CT$-segment $q'p$ is crossing free, Case~3 only happens $\OO(t)$ times before we arrive at Case~1 or Case~2.

Putting the three cases together, we obtain the recurrence
\[f(n) \leq \OO(t) + f(n')\text{, with } n' \leq n-1,\]
which solves to $f(n) = \OO(tn)$, as claimed.
\end{proof}

We are now ready to prove the theorem.

\begin{theorem} \label{thm:1Inside1OutsideR}
Consider a multigraph $(P,S)$ with $P = C \cup T$ where $C$ is in convex position, and $T = \{q,q'\}$ such that $q$ is outside the convex hull of $C$ and $q'$ is inside the convex hull of $C$. 
Let $n = |S|$ and $t$ be the number of $T$-segments.
There exists an untangle sequence of length $\OO(d_{\text{conv}}(n) + \delta(q)\delta(q')n) = \OO(d_{\text{conv}}(n) + t^2n)$ using only removal choice, where $d_{\text{conv}}(n)$ is the number of flips to untangle any multiset of at most $n$ segments with endpoints in convex position.
\end{theorem}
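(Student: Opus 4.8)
The plan is to reduce the problem to repeated applications of Lemma~\ref{lem:1Inside1OutsideR}, processing the $\delta(q)$ segments incident to the outer point $q$ one at a time so that each costs $\OO(\delta(q')n)=\OO(tn)$ flips; the factor $\delta(q)$ in the target bound $\OO(d_{\text{conv}}(n)+\delta(q)\delta(q')n)$ then comes out of the number of iterations. First I would set up the base configuration. Restricting attention to the segments with both endpoints in $C\cup\{q'\}$ (these are only ever flipped against segments sharing no endpoint with $q$), I would untangle this sub-multigraph, which is a single point inside a convex polygon, using Theorem~\ref{thm:1InsideOutsideR}, at a cost of $\OO(d_{\text{conv}}(n)+\delta(q')n)$ flips including the convex preprocessing. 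After this every $CC$- and $Cq'$-segment is crossing free and all remaining crossings lie on segments incident to $q$.

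The main argument is an induction on $\delta(q)$. I would remove one segment $s$ incident to $q$, untangle the remaining multigraph (in which $q$ has degree $\delta(q)-1$) by the inductive hypothesis, and then reinsert $s$; since the rest of the multigraph was left crossing free, every crossing is now concentrated on $s$. At this point $s$ is a single tangled segment from $q$ into the interior, so I would invoke Lemma~\ref{lem:1Inside1OutsideR} to drive all crossings onto a copy of the $TT$-segment $qq'$ using $\OO(\delta(q')n)$ flips, and finally clear those residual crossings between $qq'$ and central $CC$-segments with a farthest-first sweep from $q'$, each flip producing a crossing-free $Cq'$-segment as in Lemma~\ref{lem:farthestFirst}, costing $\OO(n)$ more. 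Summing the recurrence $F(\delta(q))\le F(\delta(q)-1)+\OO(\delta(q')n)$ with the preprocessing as the base case $F(0)$ yields the claimed $\OO(d_{\text{conv}}(n)+\delta(q)\delta(q')n)$ bound.

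The hard part will be that Lemma~\ref{lem:1Inside1OutsideR} is stated for $q$ of degree exactly one, whereas after reinsertion $q$ still carries the $\delta(q)-1$ other, currently crossing-free, segments. I would need to verify that running the lemma's farthest-first procedure on $s$ does not re-tangle those segments. Flips that create a new segment incident to $q$ are harmless, since two segments sharing the endpoint $q$ cannot cross; flips producing a $CC$-segment are harmless, since the supporting line of the chosen segment becomes crossing free. The delicate case is the lemma's Case~3, where a newly inserted $Cq'$-segment may cross another segment incident to $q$. Controlling this interference — either by temporarily removing the other $q$-segments and folding their reinsertion back into the induction, or by ordering the flips so that such crossings never arise — together with the clean treatment of multiple copies of $qq'$, is where the real work lies. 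I expect the crossing-counting together with the splitting tool of Lemma~\ref{lem:split} to close the argument once the correct invariant (all crossings confined to the segment currently being processed, with the inner $C\cup\{q'\}$ structure kept crossing free) is fixed.
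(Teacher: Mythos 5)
Your overall route is the paper's: its Phases~1--2 are exactly your preprocessing of the sub-multigraph on $C \cup \{q'\}$ via Theorem~\ref{thm:1InsideOutsideR}, its Phase~3 is your one-segment-at-a-time treatment of the $\delta(q)$ segments at $q$ via Lemma~\ref{lem:1Inside1OutsideR}, and its Phase~4 handles the residual crossings on $qq'$. But you have not closed the argument, and you say so yourself: the whole difficulty of the middle phase is that Lemma~\ref{lem:1Inside1OutsideR} assumes $q$ has degree one, whereas in the application $q$ carries $\delta(q)-1$ other, currently crossing-free, segments, and the lemma's Case~3 inserts a $Cq'$-segment whose harmlessness is argued only relative to the single segment being processed. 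Announcing that ``crossing-counting together with Lemma~\ref{lem:split} should close the argument once the correct invariant is fixed'' is a statement of the problem, not a solution. The paper's answer is precisely an invocation of Lemma~\ref{lem:split} to untangle the crossings on each $CT$-segment incident to $q$ independently, so that the flips performed for one such segment cannot create crossings involving the others; that one sentence is the load-bearing step your write-up leaves open.

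A second, concrete defect is your per-iteration cleanup of the crossings on $qq'$. Flipping the $TT$-segment $qq'$ with a central $CC$-segment $pp'$ inserts one $Cq$-segment and one $Cq'$-segment --- never a $CC$-segment --- so the ``farthest-first sweep producing an uncrossable segment as in Lemma~\ref{lem:farthestFirst}'' argument does not transfer, and with removal choice alone you cannot even select which of the two insertions occurs. The paper instead defers all $qq'$ crossings to a single final phase and observes that the endpoints of the segments still having crossings, together with $q'$, are in convex position, so that $q$ is the only point off the convex position and Theorem~\ref{thm:1InsideOutsideR} applies directly at cost $\OO(tn)$. You should either adopt that observation or give a genuinely new analysis of the $qq'$ flips; as written, this step is not justified.
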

\begin{proof}
The untangle sequence contains four phases.

\textbf{Phase~1 ($\mathbf{CC\times CC}$).} In this phase, we remove all crossings between pairs of $CC$-segments using $d_{\text{conv}}(n)$ flips. Throughout all the phases, the invariant that no pair of $CC$-segments crosses is preserved.

\textbf{Phase~2 ($\mathbf{Cq' \times CC}$).} In this phase, we remove all crossings between pairs composed of a $CC$-segment and a $CT$-segment incident to $q'$ (the point inside the convex hull of $C$) using $\OO(tn)$ flips by Theorem~\ref{thm:1InsideOutsideR}.

\textbf{Phase~3 ($\mathbf{Cq}$).}
At this point, all crossings involve a segment incident to $q$. In this phase, we deal with all remaining crossings except the crossings involving the segment $qq'$. Lemma~\ref{lem:split} allows us to remove the crossings in each $CT$-segment $s$ incident to $q$ independently, which we do using $\OO(\delta(q') n)$ flips using Lemma~\ref{lem:1Inside1OutsideR}.
As there are $\delta(q)$ $CT$-segments adjacent to $q$, the total number of flips is $\OO(\delta(q) \delta(q') n) = \OO(t^2n)$.

\textbf{Phase~4 ($\mathbf{CC \times TT}$).} At this point, all crossings involve the $TT$-segment $qq'$. The endpoints in $C$ that are adjacent to segments with crossings, together with $q'$, are all in convex position. Hence, the only endpoint not in convex position is $q$, and we apply Theorem~\ref{thm:1InsideOutsideR} using $\OO(tn)$ flips.

After the $d_{\text{conv}}(n)$ flips in Phase~1, the number of flips is dominated by Phase~3 with $\OO(\delta(q) \delta(q') n) = \OO(t^2n)$ flips.
\end{proof}

Notice that, in certain cases (for example in the red-blue case with $q,q'$ having different colors) a flip between two $CT$-segments never produces two $CT$-segments. Consequently, Case~3 of the proof of Lemma~\ref{lem:1Inside1OutsideR} never happens, and the bound in Theorem~\ref{thm:1Inside1OutsideR} decreases to $\OO(d_{\text{conv}}(n) + tn)$.

\section{Removal and Insertion Choices}
\label{sec:RemovalInsertion}

In this section, we show how to untangle a matching or a multigraph using both removal and insertion choices. We start with the case of points outside the convex separated by two parallel lines. Afterwards, we prove an important lemma and apply it to untangle a matching with points outside the convex.

\subsection{Separated by Two Parallel Lines}
\label{sec:separated}

We start with the simpler case in which $T$ is separated from $C$ by two parallel lines. In this case, our bound of $\OO(n + t|P|)$ interpolates the tight convex bound of $\OO(n)$ from~\cite{BMS19} and the $\OO(t|P|)$ bound from~\cite{BoM16} for $t$ arbitrary segments.

\begin{theorem} \label{thm:separatedRI}
Consider a multigraph $(P,S)$ with $P = C \cup T_1 \cup T_2$ where $C$ is in convex position and there exist two horizontal lines $\ell_1,\ell_2$, with $T_1$ above $\ell_1$ above $C$ above $\ell_2$ above $T_2$.
Let $n = |S|$, $T = T_1 \cup T_2$, and $t$ be the number of $T$-segments.
There exists an untangle sequence of length $\OO(n + t|P|) = \OO(tn)$ using both removal and insertion choices.
\end{theorem}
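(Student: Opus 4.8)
The plan is to split the work along the same seam that the target bound interpolates: handle the $T$-segments with the vertical-span potential of~\cite{BoM16} and the $CC$-segments with the linear both-choices strategy for convex position from~\cite{BMS19}. Sort $P=\{p_1,\dots,p_{|P|}\}$ from top to bottom, so that all of $T_1$ precedes all of $C$, which precedes all of $T_2$. As in the proof of Theorem~\ref{thm:separatedI}, set $\eta(p_ip_j)=|i-j|$ and let $\eta_T(S)$ be the sum of $\eta$ over the $T$-segments, so $0<\eta_T(S)<t|P|$. The strategy is: as long as some $T$-segment is involved in a crossing, use removal choice to flip such a crossing and insertion choice to strictly decrease $\eta_T$; otherwise every remaining crossing is a $CC\times CC$ crossing, handled by the convex strategy. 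Since a $CC$-flip touches only points of $C$ it leaves $\eta_T(S)$ unchanged, so $\eta_T$ is non-increasing across the \emph{whole} sequence and the number of flips involving a $T$-segment is $\OO(t|P|)$, independently of how the two kinds of flips are interleaved.

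The first thing to verify is that a crossing involving a $T$-segment can always be flipped so that $\eta_T$ strictly drops, and this is exactly where the two parallel lines are used. For a crossing between two $T$-segments the usual insertion choice of~\cite{BoM16} lowers their total span. The genuinely new case is a $T$-segment crossing a $CC$-segment, say a $CT_1$-segment $p_ip_j$ with $p_i\in T_1$ and $p_j\in C$ crossing a $CC$-segment $p_kp_l$. Because $p_i$ lies above $\ell_1$, hence above all of $C$, we have $i<j,k,l$; and because $p_kp_l$ lies in the strip while $p_ip_j$ reaches below $\ell_1$ only on its final descent to $p_j$, the crossing point has height at least that of $p_j$, forcing one endpoint of $p_kp_l$ strictly above $p_j$, i.e.\ $\min(k,l)<j$. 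Choosing the insertion that creates the $T$-segment $p_ip_{\min(k,l)}$ then makes its span $\min(k,l)-i$ smaller than the old span $j-i$, so $\eta_T$ decreases; the other inserted segment $p_jp_{\max(k,l)}$ is a $CC$-segment and is irrelevant to $\eta_T$. The $CT_2$ case is symmetric.

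It remains to bound the $CC$-flips. Here I would run on the $CC$-crossings the linear both-choices untangling of convex point sets from~\cite{BMS19}, which spends $\OO(1)$ amortized flips per $CC$-segment, and control interference through a potential $\psi$ equal to the number of $CC$-segments not yet settled by that strategy, with $0\le\psi\le n$. Each $CC$-flip advances the convex strategy and drops $\psi$ by at least one, and by Lemma~\ref{lem:split} a settled segment never participates again. A flip involving a $T$-segment, on the other hand, replaces at most one $CC$-segment (as in the case above it deletes $p_kp_l$ and inserts a single new $CC$-segment), so it raises $\psi$ by at most one. Since there are $\OO(t|P|)$ such flips, the total rise of $\psi$ is $\OO(t|P|)$, whence the number of $CC$-flips is $\OO(n+t|P|)$. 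Adding the two counts gives an untangle sequence of length $\OO(n+t|P|)=\OO(tn)$.

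The step I expect to be the main obstacle is this last $CC$ analysis rather than the $\eta_T$ argument. Two things must be made precise: that the convex both-choices strategy really does admit, at each step, a removal/insertion choice that permanently settles a $CC$-segment (this is the linear mechanism underlying~\cite{BMS19}), and that it survives being run \emph{incrementally}, so that the $\OO(1)$-amortized-per-segment accounting is not destroyed when interleaved $T$-flips reinsert $\OO(1)$ new $CC$-segments. I would therefore isolate a self-contained statement for convex point sets with both choices and then feed it into the potential bookkeeping above; the parallel-line separation is needed only to guarantee the $\eta_T$ decrease at mixed $T$–$CC$ crossings, while the entire $\OO(n)$ term is inherited from the convex case, and one must also check that the convex argument carries over from matchings to multigraphs, where segments may be repeated.
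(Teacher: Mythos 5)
Your overall plan coincides with the paper's: the $\eta_T$ potential disposes of every flip that touches $T$ in $\OO(t|P|)$ steps (your verification of the mixed $CT\times CC$ case, using the strip separation to force $\min(k,l)<j$, is exactly the computation the paper leaves as ``easy to verify''), and the $CC\times CC$ crossings are delegated to the $\OO(n)$ convex algorithm of~\cite{BMS19}. The gap is precisely where you place it yourself: your treatment of the $CC$-phase rests on an unproven strengthening of the convex result, namely that it settles segments at $\OO(1)$ amortized cost and tolerates having new $CC$-segments injected mid-run. That incremental property is not what~\cite{BMS19} provides, and your potential $\psi$ does not substitute for it: the bound ``total rise of $\psi$ is $\OO(t|P|)$, hence $\OO(n+t|P|)$ $CC$-flips'' is only valid if each $CC$-flip provably decreases $\psi$, which is exactly the per-step progress guarantee you concede you have not established.

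The missing observation that makes all of this unnecessary is that the two kinds of flips never actually interleave. A flip replaces the two diagonals of a convex quadrilateral by one of the two pairs of opposite sides; any segment crossing an inserted side enters the quadrilateral and, since its endpoints lie outside the convex hull of $C$ and hence outside the quadrilateral, must leave it again through one of the removed diagonals. So a $CC\times CC$ flip cannot create a crossing on a segment that was crossing-free beforehand, and in particular cannot resurrect a crossing on a $T$-segment. Under your own prioritization this means every flip involving $T$ occurs before every pure $CC$-flip; after the $\OO(t|P|)$ $T$-flips the remaining instance is a \emph{fixed} multiset of at most $n$ segments with endpoints in convex position, and a single black-box invocation of~\cite{BMS19} finishes in $\OO(n)$ flips. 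No amortization, no $\psi$, and no incremental variant of the convex algorithm is required; the only residual point (which you also flag) is that the~\cite{BMS19} convex bound is being applied to a multiset of segments, which is how the paper invokes it as well.
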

\begin{proof}
The algorithm runs in two phases.

\para{Phase~1.} We use removal choice to perform the flips involving a point in $T$. At the end of the first phase, there can only be crossings among segments with all endpoints in $C$. 
The insertion choice for the first phase is the following.
Let $p_1,\ldots,p_{|P|}$ be the points $P$ sorted vertically from top to bottom.
Consider a flip involving the points $p_a,p_b,p_c,p_d$ with $a<b<c<d$. The insertion choice is to create the segments $p_ap_b$ and $p_cp_d$. As in~\cite{BoM16}, we define the potential $\eta$ of a segment $p_ip_j$ as
$\eta(p_ip_j) = |i-j|$.
Notice that $\eta$ is an integer from $1$ to $|P|-1$. We define $\eta(S)$ as the sum of $\eta(p_ip_j)$ for $p_ip_j \in S$ with $p_i$ or $p_j$ in $T$. Notice that $0 < \eta(S) < t |P|$. It is easy to verify that any flip involving a point in $T$ decreases $\eta(S)$. Hence, the number of flips in Phase~1 is $\OO(t|P|)$.

\para{Phase~2.}
Since $T$ is outside the convex hull of $C$, flips between segments with all endpoints in $C$ cannot create crossings with the other segments, which are guaranteed to be crossing free at this point. Hence, it suffices to run an algorithm to untangle a convex set with removal and insertion choice from~\cite{BMS19}, which performs $\OO(n)$ flips.
\end{proof}

\subsection{Liberating a Line}
\label{sec:libLine}

In this section, we prove the following key lemma, which we use in the following section. The lemma only applies to matchings and it is easy to find a counter-example for multisets ($S$ consisting of $n$ copies of a single segment that crosses $pq$).

\begin{lemma}
    \label{lem:libLine}
    Consider a matching $S$ of $n$ segments with endpoints $C$ in convex position, and a segment $pq$ separating $C$. Using $\OO(n)$ flips with removal and insertion choices on the initial set $S \cup \{pq\}$, we obtain a set of segments that do not cross the line $pq$. 
\end{lemma}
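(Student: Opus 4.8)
The goal is to take a matching $S$ on convex points $C$, add one extra segment $pq$ that separates $C$ into the two arcs it cuts off, and with $\OO(n)$ flips using both choices reach a configuration where no segment crosses the \emph{line} through $pq$. Since $pq$ separates $C$, every segment of $S$ with one endpoint on each side of the line $pq$ crosses that line, and these are exactly the segments we must eliminate. The plan is to repeatedly flip the crossing segments \emph{with $pq$ itself}, always reinstating a copy of the separating segment so that it can keep absorbing crossings, and to argue that a suitable potential forces termination in $\OO(n)$ steps.

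\textbf{Strategy.} Orient $pq$ and label the two open half-planes it defines. Let $L$ be the set of segments of $S$ that cross the line $pq$; each such segment has one endpoint in each half-plane. The plan is to pick, among the segments crossing $pq$, the one crossing it \emph{closest to} a fixed endpoint (say $p$), exactly as in Lemma~\ref{lem:farthestFirst}, and flip it against $pq$. Here is where the \emph{insertion choice} is essential: a flip of $pq$ with a crossing segment $ab$ (with $a,b$ on opposite sides of the line) can reconnect the four endpoints $p,q,a,b$ in two ways, and we choose the insertion that pairs up the two endpoints lying on the \emph{same} side of the line $pq$. Because $C$ is in convex position and $pq$ separates it, this insertion produces two segments that do \emph{not} cross the line $pq$, and it regenerates a segment playing the role of the separator together with a genuinely new short segment living entirely on one side.

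\textbf{Potential and termination.} First I would observe that $S$ being a matching is exactly what lets us control the count: because each of the original endpoints has degree one, flipping against $pq$ removes a crossing segment and replaces it by segments whose endpoints are on a single side, so no new line-$pq$-crossing segment is ever created. Thus $|L|$, the number of segments crossing the line $pq$, never increases; I would then show that with the closest-crossing removal choice plus the same-side insertion choice, $|L|$ strictly decreases at each relevant flip (the newly inserted genuine segment stays on one side, and the copy of the separator is immediately reusable), giving at most $\OO(n)$ flips before $L$ is empty. At that moment no segment crosses the line $pq$, which is the desired conclusion. Finally I would discard the auxiliary copy of $pq$; since the lemma only asserts reaching a configuration free of line-$pq$ crossings (not full untangling), we are done.

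\textbf{Main obstacle.} The delicate point is verifying that the chosen insertion truly never recreates a segment crossing the line $pq$ and that it is always \emph{available} as a matching-respecting flip — that is, that the same-side pairing yields two segments that, together with the rest of $S$, still form a valid matching with the right degrees. I expect the crux to be a careful convex-position argument: using that $pq$ separates $C$, the two endpoints of any crossing segment that lie on the same side as $p$ (respectively $q$) can be joined by a segment that stays strictly on that side, and the "closest crossing to $p$" choice guarantees the regenerated separator copy still crosses only segments we have not yet processed, so the induction on $|L|$ closes cleanly. The matching hypothesis is exactly what rules out the multiset counterexample noted before the lemma, where $n$ parallel copies would force $|L|$ to stay constant.
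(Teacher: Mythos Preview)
Your plan has a real gap at the step ``it regenerates a segment playing the role of the separator.'' When you flip $pq$ with a segment $ab\in S$, the inserted pair is either $\{pa,qb\}$ or $\{pb,qa\}$; the segment $pq$ is destroyed, and since $p,q$ lie \emph{on} the line $pq$, neither inserted segment crosses that line. So there is no copy of the separator to reuse, and your insertion criterion (``pair up the two endpoints lying on the same side of the line $pq$'') is ill-defined for the same reason. What the induction actually needs is that one of the inserted segments, say $qa$, still crosses \emph{every remaining} segment of $L$, so that the next flip is available. You never argue this, and it is not automatic.

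Concretely, take $C=\{(-1,10),(2,9),(-2,6),(1,5)\}$ in convex position, $S=\{ab,cd\}$ with $a=(-1,10)$, $b=(2,9)$, $c=(-2,6)$, $d=(1,5)$, and $p=(0,100)$, $q=(0,-100)$. Both $ab$ and $cd$ cross the line $pq$ (the $y$-axis) and they do not cross each other. Flipping $pq$ with the closest segment $ab$ and inserting $\{pa,qb\}$ yields a crossing-free configuration $\{pa,qb,cd\}$ in which $cd$ still crosses the line $pq$; you are stuck with no flip to perform. The other insertion $\{pb,qa\}$ happens to work here (because $qa$ crosses $cd$), but your rule does not distinguish the two, and you give no argument that a good choice always exists.

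This existence is precisely the nontrivial content of the paper's proof: Lemma~\ref{lem:triangleHide} applied to the triangle $p_1p_2p_{2n-1}$ guarantees that at least one of $pp_{2n-1},qp_1,qp_2$ crosses all of $S$, which dictates the first flip; thereafter the invariant (one segment crossing everything left) is maintained by convexity and the specific near/far removal choices. Your outline is missing this geometric lemma and the corresponding choice of first flip, so the induction on $|L|$ does not close.
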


\begin{proof}
    For each flip performed in the subroutine described hereafter, at least one of the inserted segments does not cross the line $pq$ and is removed from $S$ (see Figure~\ref{fig:libLine}). 
    
    \begin{figure}[htb]
     \centering
     \includegraphics[scale=\graphicsScale,page=1]{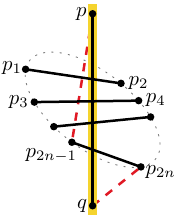} \hfill
     \includegraphics[scale=\graphicsScale,page=2]{libLine} \hfill
     \includegraphics[scale=\graphicsScale,page=3]{libLine} \hfill
     \includegraphics[scale=\graphicsScale,page=4]{libLine} \hfill
     \includegraphics[scale=\graphicsScale,page=5]{libLine}\\
     \caption{An untangle sequence of the subroutine to liberate the line $pq$ (with $n=4$).}
     \label{fig:libLine}
    \end{figure}
    
    \para{Preprocessing.}
    First, we remove from $S$ the segments that do not intersect the line $pq$, as they are irrelevant.
    Second, anytime two segments in $S$ cross, we flip them choosing to insert the pair of segments not crossing the line $pq$. One such flip removes two segments from $S$.
    Let $p_1p_2$ (respectively $p_{2n-1}p_{2n}$) be the segment in $S$ whose intersection point with $pq$ is the closest from $p$ (respectively $q$).
    Without loss of generality, assume that the points $p_1$ and $p_{2n-1}$ are on the same side of the line $pq$.
    
    \para{First flip.}
    Lemma~\ref{lem:triangleHide} applied to the segment $pq$ and the triangle $p_1p_2p_{2n-1}$ shows that at least one of the segments among $pp_{2n-1},qp_1,qp_2$ intersects all the segments of $S$.
    Without loss of generality, assume that $pp_{2n-1}$ is such a segment, i.e., that $pp_{2n-1}$ crosses all segments of $S \setminus \{p_{2n-1}p_{2n}\}$. 
    We choose to remove the segments $pq$ and $p_{2n-1}p_{2n}$, and we choose to insert the segments $pp_{2n-1}$ and $qp_{2n}$.
    As the segment $qp_{2n}$ does not cross the line $pq$, we remove it from $S$.
    
    \para{Second flip.} We choose to flip the segments $pp_{2n-1}$ and $p_1p_2$.
    If $n$ is odd, we choose to insert the pair of segments $pp_1,p_2p_{2n-1}$.
    If $n$ is even, we insert the segments $pp_2,p_1p_{2n-1}$. 
    
    By convexity, one of the inserted segment (the one with endpoints in $C$) crosses all other $n-2$ segments.
    The other inserted segment (the one with $p$ as one of its endpoints) does not cross the line $pq$, so we remove it from $S$.
    Note that the condition on the parity of $n$ is there only to ensure that the last segment $p_{2n-3}p_{2n-2}$ is dealt with at the last flip.
    
    \para{Remaining flips.} 
    We describe the third flip. The remaining flips are performed similarly.
    Let $s$ be the previously inserted segment.
    Let $p_3p_4$ be the segment in $S$ whose intersection point with $pq$ is the closest from $p$. Without loss of generality, assume that $p_3$ is on the same side of the line $pq$ as $p_1$ and $p_{2n-1}$. 
    
    We choose to flip $s$ with $p_3p_4$.
    If $s = p_2p_{2n-1}$, we choose to insert the pair of segments $p_2p_4,p_3p_{2n-1}$.
    If $s = p_1p_{2n-1}$, we choose to insert the pair of segments $p_1p_3,p_4p_{2n-1}$.
    
    By convexity, one inserted segment (the one with $p_{2n-1}$ as an endpoint) crosses all other $n-3$ segments. 
    The other inserted segment does not cross the line $pq$, so we remove it from $S$.
    Note that the insertion choice described is the only viable one, as the alternative would insert a crossing-free segment crossing the line $pq$ that cannot be removed.
\end{proof}

\subsection{Points Outside a Convex}
\label{sec:outside}

We are now ready to prove the following theorem, which only applies to matchings because it uses Lemma~\ref{lem:libLine}.

\begin{theorem} \label{thm:nearConvexRI}
Consider a matching $S$ consisting of $n$ segments with endpoints $P = C \cup T$ where $C$ is in convex position and $T$ is outside the convex hull of $C$.
Let $t = |T|$.
There exists an untangle sequence of length $\OO(t^3n)$ using both removal and insertion choices.
\end{theorem}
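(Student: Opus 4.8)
The plan is to peel the points of $T$ off by a divide-and-conquer on $t = |T|$, reducing every instance to independent instances with strictly fewer points of $T$, until we reach the purely convex case, which costs $\OO(n)$ flips with both choices by~\cite{BMS19}. Because $S$ is a matching, each point of $T$ is incident to at most one segment, so at most $t$ of the segments are $T$-segments; the whole difficulty is that these few segments, together with the $CC$-segments, may cross one another and the boundary of $\mathrm{conv}(C)$ in complicated ways. I would let $f(t)$ denote the number of flips needed for such an instance and aim for a recurrence of the form $f(t) \le f(t_1) + f(t_2) + \OO(t^2 n)$ with $t_1 + t_2 \le t$ and $t_1, t_2 \ge 1$. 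Using $f(t_1) + f(t_2) \le f(t-1) + f(1)$ as in the proof of Theorem~\ref{thm:2OutsideR}, this collapses to $f(t) \le f(t-1) + \OO(t^2 n)$, which solves to $f(t) = \OO(t^3 n)$ as claimed.

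The core of the induction is to produce a \emph{crossing-free} line $\ell$ to which Lemma~\ref{lem:split} applies, splitting the instance into two parts each containing at least one and at most $t-1$ points of $T$. Since all points of $T$ lie outside $\mathrm{conv}(C)$, whenever $t \ge 2$ there is a line $\ell$ crossing $\mathrm{conv}(C)$ with a nonempty subset of $T$ strictly on each side. Any such split preserves the structural hypothesis of the theorem: a subset of $C$ is still in convex position, and a point outside $\mathrm{conv}(C)$ is \emph{a fortiori} outside the convex hull of any subset of $C$, so the recursive instances are again of the form ``convex $C'$ plus external $T'$''. Once $\ell$ is crossing-free, no segment with both endpoints on one side can cross a segment with both endpoints on the other side, so by Lemma~\ref{lem:split} the two sides evolve independently, each carrying at most $t-1$ points of $T$. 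The case $t \le 1$ is the base of the recursion and is handled directly in $\OO(n)$ flips by a single liberation pass, which is consistent with the $\OO(t^3 n)$ budget.

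The delicate step, and the one I expect to be the main obstacle, is to make the chosen line $\ell$ crossing-free at a cost of $\OO(t^2 n)$ flips. Only two kinds of segments cross $\ell$: the $\OO(t)$ $T$-segments and the $CC$-segments. Since the $CC$-segments have all their endpoints in convex position, I would first invoke Lemma~\ref{lem:libLine} on them, bringing $\lambda(\ell)$ down to $\OO(t)$ in $\OO(n)$ flips with both choices, so that afterwards only $T$-segments cross $\ell$. The $T$-segments are the genuine problem: Lemma~\ref{lem:libLine} is stated only for matchings whose endpoints all lie in convex position, and a $T$-segment has an endpoint outside $\mathrm{conv}(C)$, so it cannot be fed to the lemma. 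Instead I would clear the $T$-segments from $\ell$ one at a time, using removal choice with the farthest-first rule of Lemma~\ref{lem:farthestFirst} together with an insertion choice that never reinserts a segment crossing $\ell$; each such clearing costs $\OO(n)$ flips, but clearing one $T$-segment may push crossings onto the others, so up to $\OO(t)$ passes over the $\OO(t)$ $T$-segments may be needed, giving the $\OO(t^2 n)$ bound. By Lemma~\ref{lem:lambda}, $\lambda(\ell)$ never increases, which I would use both to certify termination of these passes and to guarantee that $\ell$ stays clear once liberated. Untangling of the $CC$-segments is thereby deferred to the leaves of the recursion, matching the remark that segments with both endpoints in $C$ are untangled last. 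The hardest part will be verifying that a single removal/insertion choice can simultaneously avoid re-crossing $\ell$ and avoid spawning new crossings that break the per-$T$-segment $\OO(n)$ budget, i.e. reconciling the convex-only guarantee of Lemma~\ref{lem:libLine} with the presence of the external points $T$.
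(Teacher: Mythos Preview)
Your divide-and-conquer outline diverges from the paper's proof, which never tries to split the instance along a single crossing-free line. The paper instead bounds each flavour of flip directly: Lemma~\ref{lem:libLine} is invoked only on an actual $TTI$-segment $s\in S$ (so $s$ itself plays the role of the segment $pq$ required by the lemma), a set of $\OO(t)$ tangent lines controls flips involving $TTO$-segments, and a direction-based potential $\eta$ bounds the $CT\times CC$ flips; the $CC\times CC$ flips are done last via~\cite{BMS19}. There is no recursion on $t$.

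Your scheme has a genuine gap at the ``liberate $\ell$'' step. First, Lemma~\ref{lem:libLine} is stated for a matching $S$ on convex points together with a \emph{segment} $pq$; the flips it performs all involve $pq$ or its successors. Your line $\ell$ is chosen purely to separate $T$ and need not carry any segment of $S$, so you cannot start the subroutine. Second, and more seriously, even granting that $CC$-segments can be cleared from $\ell$, the remaining $T$-segments crossing $\ell$ need not be removable at all: if some $CT$-segment $qp$ with $q\in T$ happens to be crossing-free but crosses $\ell$, then $qp$ can never participate in a flip, so $\lambda(\ell)\ge 1$ forever and $\ell$ can never become crossing-free. Lemma~\ref{lem:lambda} only says $\lambda(\ell)$ does not increase; it gives you no mechanism to force a decrease, and Lemma~\ref{lem:farthestFirst} is about untangling a single extra segment against crossing-free $CC$-segments, not about pushing a fixed segment off a prescribed line. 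Because you cannot guarantee that a separating line for $T$ avoids every crossing-free $T$-segment (and those segments change as you flip), the induction never gets off the ground.
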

\begin{proof}
Throughout this proof, we partition the $TT$-segments into two types: \emph{$TTI$-segment} if it intersects the interior of the convex hull of $C$ and \emph{$TTO$-segment} otherwise.

\para{$\mathbf{TT}$-segments.}
At any time during the untangle procedure, if there is a $TTI$-segment $s$ that crosses more than $t$ segments, we apply Lemma~\ref{lem:libLine} to liberate $s$ from every $CC$-segment using $\OO(n)$ flips.
Let $\ell$ be the line containing $s$. Since $\lambda(\ell)$ cannot increase (Lemma~\ref{lem:lambda}), $\lambda(\ell) < t$ after Lemma~\ref{lem:libLine}, and there are $\OO(t^2)$ different $TTI$-segments, it follows that Lemma~\ref{lem:libLine} is applied $\OO(t^2)$ times, performing a total $\OO(t^2n)$ flips.
As the number of times $s$ is inserted and removed differ by at most $1$ and $\lambda(\ell)$ decreases at each flip that removes $s$, it follows that $s$ participates in $\OO(t)$ flips. As there are $\OO(t^2)$ different $TTI$-segments, the total number of flips involving $TTI$-segments is $\OO(t^3)$.

We define a set $L$ of $\OO(t)$ lines as follows. For each point $q \in T$, we have two lines $\ell_1, \ell_2 \in L$ that are the two tangents of the convex hull of $C$ that pass through $q$. As the lines $\ell \in L$ do not separate $C$, the potential $\lambda(\ell) = \OO(t)$.
When flipping a $TTO$-segment $q_1q_2$ with another segment $q_3p$ with $q_3 \in T$ ($p$ may be in $T$ or in $C$), we make the insertion choice of creating a $TTO$-segment $q_1q_3$ such that there exists a line $\ell \in L$ whose potential $\lambda(\ell)$ decreases. It is easy to verify that $\ell$ always exist (see Lemmas~\ref{lem:Icritical} and~\ref{lem:criticalTangent} in the Appendix). Hence, the number of flips involving $TTO$-segments is $\OO(t^2)$ and the number of flips involving $TT$-segments in general is $\OO(t^3)$.

\para{All except pairs of $\mathbf{CC}$-segments.}
We keep flipping segments that are not both $CC$-segments with the following insertion choices.
Whenever we flip two $CT$-segments, we make the insertion choice of creating a $TT$-segment. Hence, as the number of flips involving $TT$-segments is $\OO(t^3)$, so is the number of flips of two $CT$-segments.

Whenever we flip a $CT$-segment $p_1q$ with $q \in T$ and a $CC$-segment $p_3p4$, we make the following insertion choice. Let $v(q)$ be a vector such that the dot product $v(q) \cdot q < v(q) \cdot p$ for all $p \in C$, that is, $v$ is orthogonal to a line $\ell$ separating $q$ from $C$ and pointing towards $C$. We define the potential $\eta(p_xq)$ of a segment with $p_x \in C$ and $q \in T$ as the number of points $p \in C$ such that $v(q) \cdot p < v(q) \cdot p_x$, that is the number of points in $C$ before $p_x$ in direction $v$. We choose to insert the segment $p_xq$ that minimizes $\eta(p_xq)$ for $x = \{1,2\}$. Let $\eta(S)$ be the sum of $\eta(p_xq)$ for all $CT$-segments $p_xq$ in $S$. It is easy to see that $\eta(S)$ is $\OO(t|C|)$ and decreases at each flip involving a $CT$-segment (not counting the flips inside Lemma~\ref{lem:libLine}).

There are two situation in which $\eta(S)$ may increase. One is when Lemma~\ref{lem:libLine} is applied, which happens $\OO(t^2)$ times. Another one is when a $TT$-segment and a $CC$-segment flip, creating two $CT$-segments, which happens $\OO(t^3)$ times. At each of these two situations, $\eta(S)$ increases by $\OO(|C|)$. Consequently, the number of flips between a $CT$-segment and a $CC$-segment is $\OO(t^3|C|) = \OO(t^3n)$.

\para{$\mathbf{CC}$-segments.}
By removal choice, we choose to flip the pairs of $CC$-segments last (except for the ones flipped in Lemma~\ref{lem:libLine}). As $T$ is outside the convex hull of $C$, flipping two $CC$-segments does not create crossings with other segments (Lemma~\ref{lem:split}). Hence, we apply the algorithm from~\cite{BMS19} to untangle the remaining segments using $\OO(n)$ flips.
\end{proof}

\bibliography{ref}

\appendix

\section{Auxiliary Lemma of Section~\ref{sec:libLine}}
\label{sec:libLineLemmas}

In this section, we prove Lemma~\ref{lem:triangleHide} used in the proof of Lemma~\ref{lem:libLine}.

Recall that, in the proof of Lemma~\ref{lem:libLine}, we have a convex quadrilateral $p_1p_2p_{2n}p_{2n-1}$ and a segment $pq$ crossing the segments $p_1p_2$ and $p_{2n}p_{2n-1}$ in this order when drawn from $p$ to $q$, and we invoke Lemma~\ref{lem:triangleHide} to show that at least one of the segments among $pp_{2n-1},qp_1,qp_2$ intersects all the segments of $S$.
Before proving Lemma~\ref{lem:triangleHide}, we detail how to apply it to this context.

Lemma~\ref{lem:triangleHide} applied to the segment $pq$ and the triangle $p_1p_2p_{2n-1}$ asserts that at least one of the following pairs of segments cross: $pp_{2n-1},p_1p_2$, or $qp_1,p_2p_{2n-1}$, or $qp_2,p_1p_{2n-1}$.
If the segments $pp_{2n-1},p_1p_2$ cross, then we are done.
If the segments $qp_1,p_2p_{2n-1}$ cross, then the segments $qp_1,p_{2n}p_{2n-1}$ also cross and we are done.
If the segments $qp_2,p_1p_{2n-1}$ cross, then the segments $qp_2,p_{2n}p_{2n-1}$ also cross and we are done.

Next, we state and prove Lemma~\ref{lem:triangleHide}.

\begin{lemma}
    \label{lem:triangleHide}
    For any triangle $abc$, for any segment $pq$ intersecting the interior of the triangle $abc$, there exists a segment $s \in \{pa,pb,pc,qa,qb,qc\}$ that intersects the interior of the triangle $abc$.
\end{lemma}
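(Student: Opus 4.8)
The plan is to reduce to general position and then split on how many of $p,q$ lie inside the triangle. Since the five points $a,b,c,p,q$ are in general position, no three are collinear, so each of $p,q$ is either strictly interior or strictly exterior to the closed triangle, and $pq$ meets each edge-line only in its relative interior. If one endpoint, say $p$, is strictly interior, then the segment $pa$ starts at an interior point and therefore already intersects the interior, and we are done immediately. Hence the only substantial case is when both $p$ and $q$ are strictly exterior while $pq$ still crosses the interior.

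In that case I would first record the side relations forced by the crossing. A line meeting the interior of a triangle crosses exactly two edges, and any two edges of a triangle share a vertex; name the triangle so that these are $ab$ and $ac$ (common vertex $a$), crossed at $x\in ab$ and $y\in ac$, and assume without loss of generality that travelling from $p$ to $q$ we meet $x$ before $y$. Because $pq$ crosses the line $ab$ only at $x$ and enters the triangle there, $p$ lies strictly on the outer side of $ab$ and $q$ on its inner side; symmetrically, since $q$ is the endpoint exiting through $ac$, $q$ lies on the outer side of $ac$ and $p$ on its inner side. Finally, any interior point $m$ of the triangle lying on $pq$ is strictly on the inner side of the line $bc$, so by convexity $p$ and $q$ cannot both lie on the outer side of $bc$; thus at least one of them, say $p$, lies strictly on the inner side of $bc$.

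It remains to exploit these relations, which is the crux of the argument. If $p$ lies on the outer side of $ab$ but on the inner side of both $ac$ and $bc$, then the whole triangle is contained in the wedge bounded by the rays $pa$ and $pb$, with $ab$ as the near edge spanning the full wedge; the vertex $c$ is then strictly inside this wedge and on the far side of $ab$, so the ray $pc$ must cross the \emph{open} edge $ab$ and continue to $c$ through the interior, giving a segment $pc\in\{pa,pb,pc,qa,qb,qc\}$ that meets the interior. If instead it is $q$ that lies on the inner side of $bc$, the symmetric argument (exchanging the roles of $p\leftrightarrow q$, $b\leftrightarrow c$, and the edge $ac$) shows that $qb$ crosses the open edge $ac$ and meets the interior. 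The step I expect to require the most care is this wedge/visibility claim: verifying rigorously that the side conditions on $p$ force the triangle into the cone $\angle apb$, and hence force $pc$ to cross the open segment $ab$ rather than the line $ab$ outside $[a,b]$. This is precisely the configuration that would fail if $p$ could also "see" the edge $bc$ — a possibility ruled out exactly by the $bc$-dichotomy established in the previous paragraph, so the main work is making that exclusion airtight.
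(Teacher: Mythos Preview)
Your argument is correct and follows a genuinely different route from the paper. The paper splits on whether $a,b,c,p,q$ are in convex position: if they form a pentagon, it takes the diagonal from $p$ to whichever of $a,b,c$ is not adjacent to $p$ on the hull; otherwise one of $a,b,c$ lies inside the convex hull of the other four (with $p,q$ necessarily non-adjacent on that hull), and one of the two segments from that interior vertex to $p,q$ must cross the opposite edge. Your half-plane decomposition is more explicit and entirely self-contained, trading the paper's terse convex-hull reasoning for a little more bookkeeping; it also makes the role of the ``extra'' edge $bc$ transparent, which the convex-hull argument hides.

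The step you flag as delicate is in fact immediate once phrased linearly rather than via cones or visibility. You have $p$ strictly on the inner side of the lines $ac$ and $bc$, while $c$ lies \emph{on} both lines; by linear interpolation every point of $pc\setminus\{c\}$ is therefore strictly on the inner side of both. The crossing point $z$ of $pc$ with line $ab$ (which exists since $p,c$ are on opposite sides of $ab$, and which is not $c$) is such a point, and a point of line $ab$ that is strictly on the inner side of both $ac$ and $bc$ is exactly a point of the open segment $(a,b)$. Hence $z\in(a,b)$, and the open subsegment $(z,c)\subset pc$ lies in the interior of the triangle. No wedge argument is needed; the $q$-case is identical with $b$ in place of $c$.
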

\begin{proof}
    If all $a,b,c,p,q$ are in convex position, then $p$ and the point among $a,b,c$ that is not adjacent to $p$ on the convex hull boundary define the segment $s$. Otherwise, since $p,q$ are not adjacent on the convex hull boundary, assume without loss of generality that $a$ is not a convex hull vertex and $p,b,q,c$ are the convex hull vertices in order. Then, either $ap$ or $aq$ intersects $bc$.
\end{proof}

\section{Auxiliary Lemmas of Section~\ref{sec:outside}}
\label{sec:outsideLemmas}

In this section, we prove Lemma~\ref{lem:criticalTangent} and Lemma~\ref{lem:Icritical} used in the proof of Theorem~\ref{thm:nearConvexRI}.

Recall that, in the proof of Theorem~\ref{thm:nearConvexRI}, we define a set $L$ of lines as follows.
For each point $q \in T$, we have two lines $\ell_1, \ell_2 \in L$ that are the two tangents of the convex hull of $C$ that pass through $q$.
When flipping a $TTO$-segment $q_1q_2$ with another segment $q_3p$ with $q_3 \in T$ ($p$ may be in $T$ or in $C$), we make the insertion choice of creating a $TTO$-segment $q_1q_3$ such that there exists a line $\ell \in L$ whose potential $\lambda(\ell)$ decreases.
We invoke Lemma~\ref{lem:Icritical} and Lemma~\ref{lem:criticalTangent} to show that such a line $\ell$ always exist.

Indeed, by Lemma~\ref{lem:Icritical}, it is enough to show that there exists a line $\ell \in L$ containing one of the points $q_1,q_2,q_3$ that crosses one of the segments $q_1q_2$ or $q_3p$. This is precisely what Lemma~\ref{lem:criticalTangent} shows.

Next, we state prove Lemma~\ref{lem:Icritical} and Lemma~\ref{lem:criticalTangent}.

\begin{lemma}
    \label{lem:Icritical}
    Consider two crossing segments $p_1p_2,p_3p_4$ and a line $\ell$ containing $p_1$ and crossing $p_3p_4$.
    Then, one of the two pairs of segments $p_1p_3,p_2p_4$ or $p_1p_4,p_2p_3$ does not cross $\ell$.
    In other words, there exists an insertion choice to flip $p_1p_2,p_3p_4$ such that the number of segments crossing $\ell$ decreases.
\end{lemma}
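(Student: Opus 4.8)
The plan is to exploit the single hypothesis that $p_1$ lies on $\ell$. By the definition of crossing used in the paper (a single intersection point that is not an endpoint), any segment incident to $p_1$ can meet $\ell$ only at the endpoint $p_1$, and therefore never \emph{crosses} $\ell$. Since each of the two candidate insertion pairs contains a segment incident to $p_1$, namely $p_1p_3$ in the first pair and $p_1p_4$ in the second, that segment is automatically harmless in either choice. Consequently the entire question collapses to the behaviour of the single remaining segment of each pair, $p_2p_4$ or $p_2p_3$, respectively.

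Next I would use the hypothesis that $\ell$ crosses $p_3p_4$ to place $p_3$ and $p_4$ in the two opposite open half-planes bounded by $\ell$; call them $A$ (containing $p_3$) and $B$ (containing $p_4$). A segment crosses $\ell$ exactly when its two endpoints lie one in $A$ and one in $B$. The argument then splits on the side of $p_2$. If $p_2 \in A$, then $p_2p_3$ has both endpoints in $A$ and so does not cross $\ell$; together with the harmless $p_1p_4$, the pair $p_1p_4,p_2p_3$ does not cross $\ell$. Symmetrically, if $p_2 \in B$, then $p_2p_4$ stays within $B$ and the pair $p_1p_3,p_2p_4$ does not cross $\ell$. (If $p_2$ lay on $\ell$ both pairs would work, but general position rules this out.) In every case one of the two pairs avoids $\ell$, which is the first assertion.

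For the restated consequence I would simply count contributions to $\lambda(\ell)$. Among the two removed segments, $p_1p_2$ does not cross $\ell$ since it is incident to $p_1 \in \ell$, while $p_3p_4$ does cross $\ell$ by hypothesis, so exactly one removed segment crosses $\ell$. After inserting the pair selected above, neither inserted segment crosses $\ell$, so the net effect of the flip is to decrease $\lambda(\ell)$ by one.

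I expect no substantive obstacle: the statement reduces to a short case analysis on which side of $\ell$ the point $p_2$ falls. The only point that genuinely needs care is the convention that an endpoint lying on $\ell$ does not count as a crossing, as this is precisely what renders every $p_1$-incident segment free and lets me ignore one segment of each pair. The remaining mild annoyance is to dispatch (or exclude via general position) the degenerate possibilities that $p_2 \in \ell$ or that an inserted segment lies along $\ell$, neither of which affects the conclusion.
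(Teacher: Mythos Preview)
Your argument is correct and is exactly the straightforward case split the paper has in mind; the paper's own proof consists of the single word ``Straightforward.'' Your write-up simply spells out that case analysis, so there is nothing to compare.
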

\begin{proof}
    Straightforward.
\end{proof}

\begin{lemma}
    \label{lem:criticalTangent}
    Consider a closed convex body $B$ and two crossing segments $q_1q_3,q_2q_4$ whose endpoints $q_1,q_2,q_3$ are not in $B$, and whose endpoint $q_4$ is not in the interior of $B$.
    If the segment $q_1q_3$ does not intersect the interior of $B$, then at least one of the six lines tangent to $B$ and containing one of the endpoints $q_1,q_2,q_3$ is crossing one of the segments $q_1q_3,q_2q_4$.
    (General position is assumed, meaning that the aforementioned six lines are distinct, i.e., each line does not contain two of the points $q_1,q_2,q_3,q_4$.)
\end{lemma}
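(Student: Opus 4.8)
The plan is to first reduce the six‑line statement to a clean disjunction, and then to argue by contradiction using the tangent cones of $B$.

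\emph{Reduction.} A tangent line through $q_1$ and the segment $q_1q_3$ share the endpoint $q_1$, so (in general position) they meet only at $q_1$ and therefore do not \emph{cross}; the same holds for the tangents through $q_3$, and a tangent through $q_2$ shares $q_2$ with $q_2q_4$ and cannot cross it. Hence the only admissible crossings are: a tangent through $q_1$ or through $q_3$ crossing $q_2q_4$, or a tangent through $q_2$ crossing $q_1q_3$. So it suffices to show that some tangent line through $q_2$ separates $q_1$ from $q_3$, or some tangent line through $q_1$ or through $q_3$ separates $q_2$ from $q_4$.

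\emph{Tangent‑cone reformulation.} For $i \in \{1,2,3\}$, since $q_i \notin B$, the two tangent lines through $q_i$ bound a wedge $W_i$ (apex $q_i$) containing $B$. The set of (undirected) line‑directions through $q_i$ that cross a fixed segment is an arc of length $<\pi$, and a tangent through $q_i$ crosses that segment exactly when one boundary direction of $W_i$ lies strictly inside this arc. Assuming for contradiction that none of the three crossings from the reduction occurs, each relevant segment lies entirely in a single open sector of the corresponding pair of tangent lines; equivalently, seen from each apex the body $B$ and the opposite segment do not interleave angularly.

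\emph{Deriving the contradiction.} Here the two hypotheses enter. Because the diagonals $q_1q_3$ and $q_2q_4$ cross, the four points form a convex quadrilateral $q_1q_2q_3q_4$ whose diagonals meet at a point $c$; and because $q_1q_3$ avoids $\mathrm{int}\,B$, the point $c$ is not in $\mathrm{int}\,B$ and $B$ lies weakly on one side of the line $\ell_{13}$ through $q_1q_3$ (or crosses $\ell_{13}$ only beyond the segment). I would fix the side of $\ell_{13}$ containing $B$ and split on whether $q_2$ or $q_4$ lies on that same side (they are separated by $\ell_{13}$ since $q_2q_4$ meets it at $c$). In each case I would locate $c$ relative to the wedges $W_1,W_2,W_3$, using that $B$ subtends an angle $<\pi$ from every exterior point, and show that the assumed non‑interleaving at all three apices forces either $\mathrm{int}\,B$ to meet the segment $q_1q_3$ or the point $q_4$ into $\mathrm{int}\,B$, both excluded by hypothesis. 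The main obstacle is exactly this case analysis: determining from the position of $B$ which apex produces the crossing — simple examples show the roles of $q_1$ and $q_3$ genuinely swap as $B$ drifts to one side — and verifying that the interior‑avoidance hypotheses on $q_1q_3$ and on $q_4$ rule out every remaining configuration.
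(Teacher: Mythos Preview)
Your reduction is sound, and the tangent–cone language is a reasonable way to organize the picture, but the proof is not complete: the whole contradiction step is left as a plan. You write ``I would fix the side of $\ell_{13}$ \ldots'', ``I would locate $c$ relative to the wedges \ldots'', and finally concede that ``the main obstacle is exactly this case analysis''. None of the promised case analysis is carried out, and you do not exhibit, in any branch, a concrete inequality or incidence that forces $\mathrm{int}\,B$ onto $q_1q_3$ or $q_4$ into $\mathrm{int}\,B$. As written this is a sketch of a strategy, not a proof; the examiner has no way to check that the ``non-interleaving at all three apices'' really pins down the configuration the way you claim.

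The paper's argument sidesteps the case analysis entirely. By contraposition, assume none of the six tangent lines crosses either diagonal; then each of these lines, since it passes through a vertex of the convex quadrilateral $q_1q_2q_3q_4$ and does not cross the opposite diagonal, supports the quadrilateral. Now invoke the elementary arrangement fact that $m\ge 5$ lines bound at most one face with $m$ edges: since every one of the six lines supports both $B$ and the quadrilateral, the two lie in the same face. Finally, take the contact points $p_1,p_1'$ of the two tangents through $q_1$; the segment $p_1p_1'\subset B$ then separates $q_1$ from $q_3$ inside that face, so it crosses $q_1q_3$, contradicting the hypothesis on $q_1q_3$. This is a single global argument with no branching on the position of $B$; if you want to salvage your approach you must actually execute the case split, but the paper's route is both shorter and cleaner.
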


\begin{figure}[!ht]
    \hspace*{\stretch{1}}
    \pbox[b]{\textwidth}{\centering\includegraphics[page=1,scale=\graphicsScale]{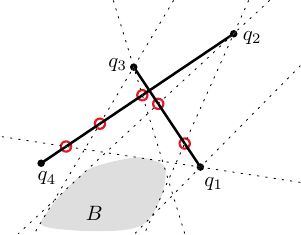}\newline(a)}\hspace*{\stretch{2}}
    \pbox[b]{\textwidth}{\centering\includegraphics[page=2,scale=\graphicsScale]{criticalTangent}\newline(b)}
    \hspace*{\stretch{1}}
    \caption{(a) In the statement of Lemma~\ref{lem:criticalTangent}, we assert the existence of points, circled in the figure, which are the intersection of a line tangent to $B$ and containing one of the points $q_1,q_2,q_3$. (b) In the proof of Lemma~\ref{lem:criticalTangent} by contraposition, we exhibit a point, circled in the figure, showing that $B$ intersects one of the segment $q_1q_3$.}
    \label{fig:criticalTangent}
\end{figure}

\begin{proof}
    For all $i \in \{1,2,3\}$, let $\ell_i$ and $\ell_i'$ be the two lines containing $q_i$ and tangent to $B$.
    By contraposition, we assume that none of the six lines $\ell_1,\ell_1',\ell_2,\ell_2',\ell_3,\ell_3'$ crosses one of the segments $q_1q_3,q_2q_4$. In other words, we assume that the six lines are tangent to the convex quadrilateral $q_1q_2q_3q_4$.
    It is well known that, if $m \geq 5$, then any arrangement of $m$ lines or more admits at most one face with $m$ edges (see~\cite{Gru73} for example).
    Therefore, $B$ is contained in the same face of the arrangement of the six lines as the quadrilateral $q_1q_2q_3q_4$.
    Let $p_1$ (respectively $p_1'$) be a contact point between the line $\ell_1$ (respectively $\ell_1'$) and the convex body $B$.
    The segment $p_1p_1'$ crosses the segment $q_1q_3$ and is contained in $B$ by convexity, concluding the proof by contraposition.
\end{proof}

\end{document}